\documentclass{article}
%

\usepackage{amsfonts,psfrag,epsfig}
\usepackage{amsfonts,epsfig}
\usepackage{color}
\usepackage{amsmath,amssymb,hyperref}
\usepackage[amsmath,thmmarks]{ntheorem}
\usepackage{tabularx}
\usepackage{epstopdf}

\usepackage[margin = 1.1in]{geometry}

\hypersetup{colorlinks=true}

\newtheorem{theorem}{Theorem}
\newtheorem{lemma}[theorem]{Lemma}

{ \theoremstyle{nonumberplain}\theoremsymbol{\ensuremath{\Box}}
\theorembodyfont{\normalfont}
\newtheorem{proof}{Proof.}
}
\usepackage{cite}
\usepackage{algorithmic}
\usepackage{array}
\usepackage[caption=false,font=normalsize,labelfont=sf,textfont=sf]{subfig}

\begin{document}
\title{Maximum Weight Matching using Odd-sized Cycles: Max-Product Belief Propagation and Half-Integrality}

\author{Sungsoo~Ahn,~
Michael~Chertkov,~
Andrew~E.~Gelfand,~
Sejun~Park 
and~Jinwoo~Shin
\thanks{S.\ Ahn, S.\ Park and J.\ Shin are with the School of Electrical Engineering, Korea Advanced Institute of Science and Technology (KAIST), South Korea. Email: \{sungsoo.ahn, sejun.park, jinwoos\}@kaist.ac.kr. %
M.\ Chertkov is with Theoretical Division \& Center for Nonlinear Studies,
Los Alamos National Laboratory, Los Alamos, NM 87545, USA and
Skolkovo Institute of Science and Technology, 143026 Moscow, Russia. Email: chertkov@lanl.gov. %
A.\ E.\ Gelfand is with Engineers Gate Manager LP
767 5th Ave
New York, NY 10153, 
USA. Email: andrew.gelfand@gmail.com.}%
}
\maketitle

\begin{abstract}
We study the Maximum Weight Matching (MWM) problem for general graphs through 
the max-product Belief Propagation (BP) and related Linear Programming (LP).
The BP approach
provides distributed heuristics for finding the Maximum A Posteriori (MAP) assignment in a joint probability distribution represented by a Graphical Model (GM) and respective LPs
can be considered as continuous relaxations of the discrete MAP problem. 
It was recently shown that a BP algorithm
converges to the correct MAP/MWM assignment under a simple GM formulation of MWM as long as the corresponding LP relaxation is tight.
First, under the motivation for forcing the tightness condition,
we consider a new GM formulation of MWM, say C-GM, using non-intersecting odd-sized cycles in the graph:
the new corresponding LP relaxation, say C-LP, becomes tight for more MWM instances.
However, 
the tightness of C-LP now does not guarantee 
such convergence and correctness of 
the new BP on C-GM.
To address the issue, we introduce  
a novel graph transformation applied to C-GM, which results in
another GM formulation of MWM, and
prove that the respective BP on it converges to the correct MAP/MWM assignment as long as C-LP is tight.
Finally, we also show that C-LP always has half-integral solutions,
which leads to
an efficient BP-based MWM heuristic consisting of making sequential, ``cutting plane'', modifications to the underlying GM. Our experiments show that this BP-based cutting plane heuristic performs as well as that based on traditional LP solvers.
\end{abstract}


%

\section{Introduction}
%
%
%
%
{Graphical} Models (GMs) have been utilized 
for reasoning in a variety of scientific fields \cite{05YFW,08RU,09MM,08WJ}. Such models use a graph structure to encode the joint probability distribution, where vertices correspond to random variables and edges specify conditional dependencies. An important inference task in many applications involving GMs is to find the most likely assignment to the variables in a GM - the maximum a posteriori (MAP) configuration. The max-product Belief Propagation (BP) is a popular approach for approximately solving the MAP inference problem. BP is an iterative, message-passing algorithm that is exact on tree structured GMs. However, BP often shows remarkably strong heuristic performance beyond trees, i.e., on GMs with loops. Distributed implementation, associated ease of programming and strong parallelization potential are the main reasons for the growing popularity of the BP algorithm, e.g., see \cite{09GLG,10LGKBGH} for recent discussions of BP's parallel implementations.

The convergence and correctness of BP was recently established for a certain class of loopy GM formulations of several classical combinatorial optimization problems, including matchings \cite{08BSS,11SMW,07HJ}, perfect matchings \cite{11BBCZ}, shortest paths \cite{08RT}, independent sets \cite{07SSW} and network flows \cite{10GSY}. The important common feature of these instances is that BP converges {in polynomial-time} to a correct MAP assignment when the Linear Programming (LP) relaxation of the MAP inference problem is tight, i.e., when it shows no integrality gap. While this demonstrates that LP tightness is necessary for the convergence and correctness of BP, it is unfortunately not sufficient in general. In other words, BP may not work even when the corresponding LP relaxation to the MAP inference problem is tight. 

To handle this issue, several message-passing alternatives to BP have been  studied, e.g.,
TRW (tree-reweighted) \cite{WJW05,KW05}, MPLP (max-product linear programming) \cite{GJ08}, MSD (max-sum diffusion) \cite{W07},
TCBO (tree-consistency bound optimization) \cite{MGW09}.
When applied to binary pairwise GMs, these algorithms are provably convergent (or conjectured to be convergent)
while TRW and MPLP are guaranteed to converge to the solution of 
the respective LP relaxation.
However, these convergence guarantees come from the monotone decreasing property of a certain global objective with respect to the number of iterations, thus making analysis of the convergence rates difficult.
Furthermore, in the case of non-pairwise GMs, none of the aforementioned methods guarantee correctness and convergence simultaneously even in the case when
the LP relaxation is tight.
We remark that one can try generic centralized optimizing approaches of
gradient-descent, simplex, ellipsoid or interior-point types
for solving the LP relaxation, but this may not be practical for large-scale problems.

In this paper we restrict our attention to a specific class of {non-pairwise} GMs corresponding to the Maximum Weight Matching (MWM) problem. Under this model, we address the question
whether BP can be used as an iterative, message-passing-based LP solver. 
Our work builds upon that of Sanghavi, Malioutov and Willsky \cite{11SMW}, who studied BP under
a simple GM formulation of the MWM problem on an arbitrary graph. The authors showed that 
BP on the simple GM converges to the correct, MAP solution 
if the corresponding LP relaxation is tight. 
Unfortunately, the tightness is not guaranteed in general. 
To enforce the tightness we add some extra constraints (thus utilizing the classic cutting plane method \cite{54DFJ}), i.e., odd-sized cycle inequalities, to the simple GM and LP, which
results in new GM, say C-GM and the corresponding LP relaxation, say C-LP.
Note that similar cycle constraints have been studied for tightening LPs of generic MAP problems \cite{SCL12}.
However, the addition of these extra constraints unfortunately result in a violation of 
BP's
convergence and correctness on C-GM even under the tightness of the corresponding LP relaxations.

To resolve this issue, we propose
a novel graph transformation applied to C-GM
and show that the `modified' max-product BP
converges to the MWM assignment as long as C-LP is tight. 
For the guarantee, the only restriction placed on C-GM 
is that the set of cycle constraints 
remains non-intersecting (in edges).
In other words, we design a new GM, say C-GM$^{\prime}$, 
via a certain graphical transformation of C-GM, i.e., collapsing odd cycles into new vertices and defining new weights on the contracted graph, so that
the BP algorithm on C-GM$^{\prime}$ converges to MAP under the tightness of C-LP.
It is important to emphasize that the MAP assignments of C-GM and C-GM$^{\prime}$ are in one-to-one correspondence. 
This clever graphical transformation allows us to use the computational tree technique similar to \cite{11SMW}
(but our case is much more complex)
for guaranteeing the BP performance over C-GM$^{\prime}$,
which is impossible under C-GM.

In summary, we construct a novel GM so that the corresponding max-product BP on it converges to the MWM assignment {in polynomial-time} as long as the MWM-LP relaxation with cycle constraints is tight. Furthermore, we prove that the MWM-LP relaxation has half-integral solutions in general. 
Combining these theoretical results naturally guides a cutting-plane method suggesting a sequential and adaptive
design of GMs using respective BPs. We use the output of BP to identify odd-sized cycle constraints - ``cuts'' - to add to the MWM-LP, construct a new GM using our graphical transformation, run BP and repeat. We evaluate this heuristic approach empirically and show that its performance is close to the traditional cutting-plane approach employing LP solvers rather than BP, i.e., the distributed BP is as powerful as the LP solver admitting only global implementation.
To our knowledge, our result is the first one to suggest how to ``fix" BP via a graph transformation so that it works properly, i.e., succeeds in recovering the desired LP solution. 
We believe that our success in crafting a graphical transformation will offer useful insights into the design and analysis of BP algorithms on a wider class of MAP inference problems.

\subsection{Organization}

In Section \ref{sec:pre}, we introduce a standard GM formulation of the MWM problem as well as the corresponding BP
and LP. In Section \ref{sec:gtBP}, we introduce our new GM and describe performance guarantees of the respective BP algorithm. In Section \ref{sec:cutting}, we describe a cutting-plane(-like) method using BP for the MWM problem
as well as a half-integrality proof for the MWM-LP relaxation using odd-sized cycles.
In particular, we show its empirical performance for random MWM instances. Section \ref{sec:concl} is reserved for brief conclusions and discussing the path forward.

\section{Preliminaries}\label{sec:pre}

\subsection{Graphical Model for Maximum Weight Matchings}

A joint distribution of $n$ (discrete) random variables $Z=[Z_i]\in \Omega^n$ is called a Graphical Model (GM) if it factorizes as follows: 
\begin{equation}
	\Pr[Z=z]~\propto~\prod_{\alpha\in F} \psi_{\alpha} (z_\alpha),\qquad\mbox{for}\quad 
	z=[z_i]\in \Omega^n,\label{eq:generic_gm}
\end{equation}
where $F$ is a collection of subsets of $\Omega$, $z_\alpha=[z_i:i\in \alpha \subset \Omega]$ is a subset of variables, and $\psi_{\alpha}$ is some (given) non-negative function. The function $\psi_\alpha$ is called a factor (variable) function if $|\alpha|\geq 2$ ($|\alpha|=1$). For variable functions $\psi_\alpha$ with $\alpha=\{i\}$, one simply writes $\psi_\alpha = \psi_i$. One calls $z$ a valid assignment if $\Pr[Z=z]>0$. The MAP assignment $z^*$ is defined as
\begin{equation*}
z^*~=~\arg\max_{z\in \Omega^{n}} \Pr[Z=z].
\end{equation*}
	
Let us introduce the Maximum Weight Matching (MWM) problem and its related GM. Suppose we are given an undirected graph $G=(V,E)$ with weights $ \{w_e:e\in E\}$ assigned to its edges. A \emph{matching} is a set of edges without common vertices. The weight of a matching is the sum of corresponding edge weights. The MWM problem consists of finding a matching of maximum weight. Associate a binary random variable with each edge $X=[X_e]\in \{0,1\}^{|E|}$ and consider the probability distribution defined as
\begin{equation*}
	\mbox{C-GM}:\quad\Pr[X=x] ~\propto~ \prod_{e\in E}e^{w_ex_e} \prod_{i\in V} \psi_i(x) \prod_{C\in\mathcal C} \psi_C(x),
\end{equation*}
for
$x=[x_e]\in \{0,1\}^{|E|},
$
where $\psi_i(x), \psi_C(x)$ are
\begin{align*}
&\psi_i(x)=\begin{cases}
1 &\mbox{if}~\sum_{e\in \delta(i)} x_e \leq 1\\
0&\mbox{otherwise}
\end{cases}\\
&\psi_C(x)=
\begin{cases}
1 &\mbox{if}~\sum_{e\in E(C)} x_e\leq \frac{|C|-1}2 \\
0 &\mbox{otherwise}
\end{cases}.
\end{align*}
Here $\mathcal{C}$ is a set of odd-sized cycles $\mathcal C\subset 2^V$, $\delta(i)=\{(i,j)\in E\}$ and $E(C)=\{(i,j)\in E:i,j\in C\}$. Throughout the paper, we assume that cycles are non-intersecting in edges, i.e., $E(C_1)\cap E(C_2)=\emptyset$ for all $C_1,C_2\in \mathcal C.$
It is easy to see that a MAP assignment $x^*$ for C-GM induces a MWM in $G$,
{and the factor $\psi_C$ is redundant, i.e.,
it does not change the distribution C-GM. However, as explained in the next section,  the corresponding BP algorithm
and related LP relaxation depend on the choice of $\mathcal C$.}

\subsection{Belief Propagation and Linear Programming for Maximum Weight Matchings}\label{sec:bp}

In this section, we introduce max-product Belief Propagation (BP)
and the Linear Programming (LP) relaxation to computing the MAP assignment in C-GM.
We first describe the BP algorithm for the general GM \eqref{eq:generic_gm}, then tailor the algorithm
to C-GM.
The BP algorithm updates the set of messages $\{m^{t}_{\alpha\rightarrow i}(z_i),m^{t}_{i\rightarrow\alpha}(z_i):z_i\in\Omega\}$
between each variable $i$ and its associated factors $\alpha\in F_i=\{\alpha\in F:i\in \alpha, |\alpha|\geq 2\}$ using the following update rules:
\begin{align*}
&m^{t+1}_{\alpha\rightarrow i}(z_i) ~=~ 
\max_{z^\prime:z^\prime_i=z_i} \psi_\alpha (z^\prime) \prod_{j\in \alpha\setminus i} m_{j\rightarrow \alpha}^t (z^\prime_j)\\
&m^{t+1}_{i\rightarrow\alpha}(z_i) ~=~  \psi_i(z_i)\prod_{\alpha^{\prime}\in F_i\setminus \alpha} m_{\alpha^{\prime} \rightarrow i}^t (z_i).
\end{align*}

Here $t$ denotes time 
and initially
	$m^0_{\alpha\to i}(\cdot)=m^0_{i\to\alpha}(\cdot)=1$.
Given a set of messages $\{m_{i\to\alpha}(\cdot),m_{\alpha\to i}(\cdot))\}$, the BP (max-marginal) beliefs $\{n_i(z_i)\}$ are defined as follows:
\begin{eqnarray*}
n_i(z_i)~=~\psi_i(z_i)\prod_{\alpha\in F_i} m_{\alpha\to i}(z_i).
\end{eqnarray*}
For C-GM, one introduces
$n^t_{e}(\cdot)$ to denote the BP belief on edge $e\in E$ at time $t$.
The algorithm outputs MAP estimate at time $t$, $x^{\mbox{\tiny BP}}(t)=[x^{\mbox{\tiny BP}}_e(t)]\in \left[0,?,1\right]^{|E|}$, where
\begin{equation*}
x^{\mbox{\tiny BP}}_e(t)=\begin{cases}
1&\mbox{if}~n^t_{e}(0)<n^t_{e}(1)\\
?&\mbox{if}~n^t_{ij}(0)=n^t_{e}(1)\\
0&\mbox{if}~n^t_{e}(0)>n^t_{e}(1)
\end{cases}.
\end{equation*}
The LP relaxation of the MAP expression for C-GM is
\begin{align*}
\mbox{C-LP}:\quad&\max~ \sum_{e\in E} w_e x_e\\
\mbox{s.t.}\quad&\sum_{e\in \delta(i)} x_e\leq 1,\quad\forall i\in V,\\	
&\sum_{e\in E(C)} x_e\leq \frac{|C|-1}2,\quad\forall~ C\in\mathcal C,\\ 
&x_e\in[0,1].
\end{align*}
Observe that for any $\mathcal C$, if the solution $x^{\mbox{\tiny C-LP}}$ to C-LP is integral, i.e., $x^{\mbox{\tiny C-LP}}\in \{0,1\}^{|E|}$, then it is a MAP assignment, i.e., $x^{\mbox{\tiny C-LP}} = x^*$. {The role of additional cycle inequalities involving $\mathcal C$ is forcing C-LP to become tight, i.e., to show an integral solution. It is important to stress that the cycle inequalities just introduced are 
different from (much more popular) the Edmonds' blossom inequalities \cite{65Edm} defined for an odd-sized subset $S\subset V$,
\begin{equation*}
\sum_{e\in \delta(S)} x_e \geq 1,
\end{equation*}
where $\delta(S)$ is the set of edges between $S$ and $V\setminus S$. We consider the cycle inequalities instead because it makes the BP analysis tractable, but also (and mainly) because
it leads to a simple cutting-plane scheme related to BP and that takes advantage of the BP's half-integral property
(see Section \ref{sec:cutting}). This half-integrality plays a key role in designing  
our BP-based cutting-plane scheme and it does not hold in general for the Edmonds' inequalities. 
We also note that a polynomial-time cutting-plane algorithm using the Edmonds' inequalities was recently 
established \cite{12CLS}. However, the method of \cite{12CLS} is rather complex, e.g. in focusing on careful selection of blossoms which do not brake half-integrality, and thus dependent on 
 traditional, centralized methods (e.g., ellipsoid, interior-point or simplex) that are generally not practical for large-scale problems.}

Sanghavi, Malioutov and Willsky \cite{11SMW} 
proved the following theorem connecting the performance of BP and C-LP in a special case:
\begin{theorem}\label{thm:sujay}
	If $\mathcal C=\emptyset$ and the solution of C-LP is integral and unique, then $x^{\mbox{\tiny BP}}(t)$
	under C-GM converges
	to the MWM assignment $x^*$ in $O(w_{\max}/c)$ iterations, where $w_{\max},c$ are defined as
	\begin{align*}
	&w_{\max}:=\max_{e\in E} w_e\\
	&c:=\inf_{x\neq x^*: \mbox{\tiny $x$ is feasible to C-LP}}\frac{w \cdot(x^*-x)}{|x^*-x|}>0.
	\end{align*}
\end{theorem}
Adding small random noise to every weight guarantees the uniqueness condition required by Theorem \ref{thm:sujay}.
{In particular, one can design random noise utilizing the Isolation Lemma \cite{MVV87} thus
guaranteeing a polynomial convergence (with respect to $|V|$) of BP independent of $c$ through technical arguments typical for the Lemma, see e.g. \cite{10GSY}.}
One naturally makes a conjecture that Theorem \ref{thm:sujay} extends to a non-empty $\mathcal C$ since adding more cycles 
can help to reduce the integrality gap of C-LP.
However, the theorem does not hold when $\mathcal C\neq \emptyset$. For example, it is straightforward to check that BP does not converge for a triangle graph with edge weights $\{2,1,1\}$ and $\mathcal C$ consisting of the single cycle. This is true even though the solution of the corresponding C-LP is unique and integral.

\section{Graphical Transformation for Convergent and Correct Belief Propagation}\label{sec:gtBP}

The loss of convergence and correctness of BP when the MWM LP is tight (and unique),
however $\mathcal C\neq \emptyset$, motivates consideration of  this section.
We resolve the issue by designing a new GM, equivalent to the original GM, however such that when BP is applied to this new GM it converges to the MAP/MWM assignment whenever the LP relaxation is tight and unique - even if $\mathcal{C}\neq\emptyset$.
The new GM is defined on an auxiliary graph $G^{\prime} = (V^{\prime}, E^{\prime})$ with new weights, $\{w^{\prime}_e:e\in E^{\prime}\}$, as follows:
\begin{align*}
V^{\prime} &= V\cup \{i_C:C\in \mathcal C\}, 
\\
E^{\prime} &= E\cup\{(i_C,j):j\in V(C), C\in \mathcal C\}\setminus \{e:e\in \cup_{C\in \mathcal C}E(C)\},\\
w_e^{\prime} &=
\begin{cases}
\frac12\sum_{e^{\prime}\in E(C)} (-1)^{d_C(j,e^{\prime})} w_{e^\prime}
\\
\qquad\qquad\quad\mbox{if}~e=(i_C,j) ~\mbox{for some}~C\in \mathcal C\\
w_e\qquad\qquad\mbox{otherwise}
\end{cases}.
\end{align*}
Here $d_C(j,e)$ is the graph distance of $j$ and $e$ in cycle $C=(j_1,j_2,\dots,j_k)$, e.g., if $e=(j_2,j_3)$, then $d_C(j_1,e)=1$.
\begin{figure}[t!]
\begin{center}
\includegraphics[width=0.4\textwidth]{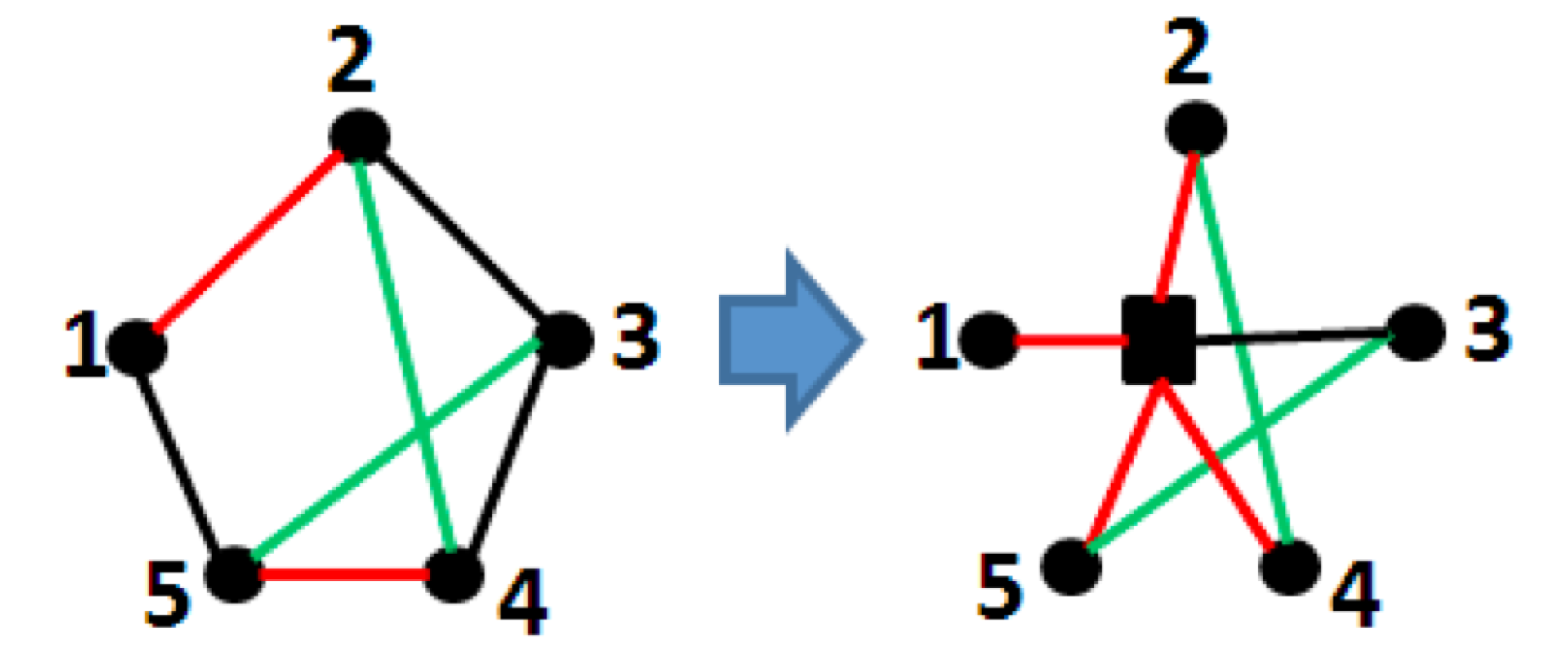}
\end{center}
\caption{\small Example of original graph $G$ (left) with an odd cycle of length 5 and new graph $G^{\prime}$ (right).}
\label{fig:graph_transformation}
\end{figure}
Associate binary variables for each new edge 
and consider the new probability distribution defined as
\begin{equation*}
{\mbox{C-GM}^\prime:}\quad\Pr[Y=y] ~\propto~ \prod_{e\in E^{\prime}}e^{w_e^{\prime}y_e}
\prod_{i\in V} \psi_i(y)\prod_{C\in \mathcal C} \psi_C(y),
\end{equation*}
for $
y=[y_e:e\in E^{\prime}]\in \{0,1\}^{|E^{\prime}|}$, where
\begin{align*}
\psi_i(y)&=\begin{cases}
1 &\mbox{if}~\sum\limits_{e\in \delta(i)} y_e \leq 1\\
0&\mbox{otherwise}
\end{cases},\\
\psi_C(y)&=
\begin{cases}
0 &\mbox{if}~\sum\limits_{e\in\delta(i_C)} y_e >|C|-1 \\
0 &\mbox{if}~\sum\limits_{j\in V(C)} (-1)^{d_C(j,e)} y_{i_C,j} \notin \{0,2\}\\
&\qquad\mbox{for some}~e\in E(C) \\
1 &\mbox{otherwise}
\end{cases}.
\end{align*}
It is not hard to check that the number of operations required to update messages at each round of BP under the above GM is $O(|V||E|)$,
since the number of non-intersecting cycles is at most $|E|$ and
message updates involving the factor $\psi_C$ can be efficiently done using dynamic programming.
We are now ready to state the main result of this paper.
\begin{theorem}\label{thm:main}
	If the solution of C-LP is integral and unique, then the BP-MAP estimate $y^{\mbox{\tiny BP}}(t)$
	applied to C-GM$^\prime$ converges
	to the MAP assignment $y^*$. Furthermore, the MWM assignment $x^*$ 
	is related to $y^*$ as follows:
	\begin{equation}\label{eq:mainthm}
		x_e^* =
	\begin{cases}
	\frac12\sum_{j\in V(C)} (-1)^{d_C(j,e)} y_{i_C,j}^*&\mbox{\rm if}~~e\in \bigcup_{C\in\mathcal C} E(C)\\
	\qquad\quad y_e^*&\mbox{\rm otherwise}
	\end{cases}.\end{equation}
\end{theorem}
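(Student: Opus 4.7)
The plan is to combine three ingredients: an explicit weight-preserving bijection between integral feasible configurations in the two GMs (which also matches up the LP relaxations), verification that the new GM satisfies the degree-two (DT) condition, and a computational-tree argument adapted to the new cycle factor $\psi_C$. For the bijection, the inverse of the natural map $y_{i_C,j}=\sum_{e\in E(C)\cap \delta(j)}x_e$, $y_e=x_e$ otherwise, is exactly the reconstruction formula \eqref{eq:mainthm}; a short algebraic check shows $\sum_e w_e x_e=\sum_{e\in E'}w'_e y_e$ using the definition of $w'_{(i_C,j)}$ on cycle-adjacent edges. Moreover, the matching constraints at vertices $i\in V$ and the blossom inequalities $\sum_{e\in E(C)}x_e\leq (|C|-1)/2$ translate under this map into the LP relaxation of the new factors, i.e.\ $\sum_{e\in\delta(i_C)}y_e\leq |C|-1$ together with the parity relaxation $\sum_{j\in V(C)}(-1)^{d_C(j,e)}y_{i_C,j}\in[0,2]$ for every $e\in E(C)$. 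Hence C-LP and the LP of \eqref{eq:newMRF} are equivalent, the hypothesis transfers, and the LP on $G'$ has a unique integral optimum $y^*$ given by \eqref{eq:mainthm}. This establishes the reconstruction claim and reduces the theorem to BP convergence on $G'$.

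For the DT condition, each variable $y_e$ lies in exactly two nontrivial factors: if $e=(i,j)$ with $i,j\in V$ then $e\notin E(C)$ for any $C\in\mathcal C$ by the construction $E'=E\cup\{(i_C,j):\cdots\}\setminus\cup_C E(C)$, so only $\psi_i$ and $\psi_j$ contain $y_e$; if $e=(i_C,j)$, then only $\psi_j$ and $\psi_C$ contain $y_e$. This property fails in the original GM (where a cycle edge lies in $\psi_i,\psi_j,\psi_C$), which is precisely what breaks the analogue of Theorem~\ref{thm:sujay} in the counterexample on the triangle. With DT in hand, I would follow the computational-tree technique of \cite{08BSS,11SMW,10GSY}: assume for contradiction that for some edge $e^\circ$ and arbitrarily large $t$ the BP belief at $e^\circ$ disagrees with $y^*_{e^\circ}$, unroll the standard computational tree $T_{e^\circ,t}$ of depth $t$ rooted at $e^\circ$, and use that the BP belief at the root equals the weight-maximizing feasible assignment $\hat y$ on $T_{e^\circ,t}$. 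Comparing $\hat y$ with the natural lift of $y^*$ to $T_{e^\circ,t}$, the symmetric difference decomposes into alternating paths (no cycles, since $T_{e^\circ,t}$ is a tree), and the goal is to project these paths back to $G'$ to produce a feasible $\tilde y$ of the LP on $G'$ with $\sum_e w'_e\tilde y_e\geq \sum_e w'_e y^*_e$ and $\tilde y_{e^\circ}\neq y^*_{e^\circ}$, contradicting uniqueness of the LP optimum.

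The main obstacle is the projection step through the cycle factor $\psi_C$. An alternating path in $T_{e^\circ,t}$ may enter a copy of $i_C$ along one edge $y_{i_C,j}$ and leave along another $y_{i_C,j'}$, and in order for $\tilde y$ to remain LP-feasible the parity relaxation $\sum_{j\in V(C)}(-1)^{d_C(j,e)}y_{i_C,j}\in[0,2]$ must survive the flip for every cycle edge $e\in E(C)$. The key technical lemma will therefore be that path endpoints at each cycle $C$ can be paired so that the induced perturbation at $i_C$ corresponds to a single blossom flip along $C$ in the original graph $G$, and I expect the sign pattern $(-1)^{d_C(j,e)}$ baked into $w'_{(i_C,j)}$ to make the weight change of that flip telescope to the usual blossom contribution $\tfrac{1}{2}\sum_{e'\in E(C)}\pm w_{e'}$. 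Verifying this pairing/telescoping identity -- essentially the reason why the graphical transformation was designed as it was -- is where I expect the bulk of the work to lie.
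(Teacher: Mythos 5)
Your overall strategy is the same as the paper's: establish the weight-preserving linear bijection between C-LP and the LP over $G'$ (which gives the reconstruction formula and transfers integrality/uniqueness), verify the degree-two condition, and then run a computational-tree argument in which a disagreement at the root is turned into a perturbation of $y^*$ that contradicts uniqueness of the LP optimum. The bijection and DT parts of your proposal are correct and essentially identical to what the paper does. However, the two places you defer are exactly where the paper's proof lives, and as stated your plan for them has genuine gaps.

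First, the symmetric difference of the tree-MAP configuration and the lift of $y^*$ does \emph{not} decompose into alternating paths once cycle factors are present: a copy of $i_C$ can have many incident edges in the symmetric difference, and a feasible flip through $\psi_C$ need not alternate (a $1\to 0$ flip can be followed by another $1\to 0$ flip, as the paper notes explicitly). So ``pairing path endpoints at each cycle'' is not the right picture. The paper instead \emph{constructs} a single path greedily, flipping one chosen child at each factor, and the key combinatorial lemma (its Lemma \ref{lem:valid}) is that this greedy construction never gets stuck; the proof hinges on the characterization that $y$ satisfies $\psi_C$ iff $y$ restricted to $\delta(i_C)$ is the indicator of a disjoint union of even paths in $C$, together with a two-case analysis of how a flip breaks an even path into an even and an odd piece. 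Feasibility of the projected perturbation $y^*+\varepsilon(n^{1\to 0}-n^{0\to 1})$ then does not follow from alternation; it requires averaging the finitely many per-copy feasible points ${\tt FLIP}_{e'}({\tt FLIP}_e(y^*_C))$ inside the polytope $\mathcal P_C$. Your ``telescoping'' intuition about $w'_{(i_C,j)}$ is only needed for the weight bookkeeping of the bijection, not for this feasibility step.

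Second, you never address truncation at the leaves of $T_{e^\circ}(t)$. If the constructed path reaches depth $t$, the projected perturbation of $y^*$ is infeasible at the truncated end, and ``arbitrarily large $t$'' alone does not rescue the argument. The paper repairs the counts near the truncation point (at most three adjustments, giving an additive error of at most $3w'_{\max}$) and then invokes the strictly positive LP gap constant $c=\inf_{y\neq y^*}\,w'\cdot(y^*-y)/|y^*-y|$, which exists precisely because the LP optimum is unique and integral, to show the weight gain is still positive once $t>6w'_{\max}/c+6$. Without this quantitative step your contradiction does not close, and it is also what yields the explicit convergence time in Lemma \ref{thm3}.
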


The proof of Theorem \ref{thm:main} is provided in the following sections.
We also establish the convergence rate of the BP algorithm over C-GM$^\prime$
(see Lemma \ref{thm3}), {i.e., BP converges in $O(w_{\max}^\prime/c)$ where the constant $c$
is defined in Theorem \ref{thm:sujay}. As we mentioned earlier, one can 
use the Isolation Lemma \cite{MVV87} and related machinery
to establish a polynomial convergence (with respect to $|V|$)
independent of $c$.}
Let us emphasize that C-GM$^\prime$ is designed so that
each variable is associated to at most two factor nodes. We call this condition, which did not
hold for the original C-GM, the `degree-two' (DT) condition. 
The DT condition is necessary for analysis of BP using the computational tree technique, developed and advanced in  \cite{08BSS,11SMW,10GSY,63G,00FK,01WF, PS145}.
It will also
play a critical role in the proof of Theorem~\ref{thm:main}. We further remark that even under the DT condition
and given tightness/uniqueness of the LP, proving correctness and convergence of BP is still highly
non trivial. In our case, it requires careful study of the computation tree induced by BP with appropriate
truncation at the leaves.

\subsection{Main Lemma required to prove Theorem \ref{thm:main}}

Let us introduce the following auxiliary LP over the new graph and weights.
\begin{align}
\mbox{C-LP}^{\prime}&:~\max~ \sum_{e\in E^{\prime}} w_e^{\prime} y_e\notag\\
\mbox{s.t.}~&\sum_{e\in \delta(i)} y_e\leq 1,\quad\forall i\in V,  \quad y_e\in[0,1],\quad\forall e\in E^{\prime},
\label{eq1:clp'}\\
\quad\quad &\sum_{j\in V(C)} (-1)^{d_C(j,e)} y_{i_C,j}\in[0,2],\quad\forall e\in E(C),\\
&\sum_{e\in \delta(i_C)} y_e \leq {|C|-1},\quad\forall C\in \mathcal C.\label{eq2:clp'}
\end{align}
Next, consider the following one-to-one linear map between $x=[x_e:e\in E]$ and
$y=[y_e:e\in E^{\prime}]$:
\begin{align*}
y_e &=
\begin{cases}
\sum_{e^{\prime}\in E(C) \cap \delta(i)} x_{e^{\prime}}&\mbox{if}~e=(i,i_C)\\
\;\quad x_e&\mbox{otherwise}
\end{cases},\\
x_e &=
\begin{cases}
\frac12\sum_{j\in V(C)} (-1)^{d_C(j,e)} y_{i_C,j}&\mbox{if}~e\in \bigcup_{C\in\mathcal C}E(C)\\
\qquad\quad y_e&\mbox{otherwise}
\end{cases}.
\end{align*}
One can verify that under this maps C-LP = C-LP$^{\prime}$. Besides,
if solution $x^{\mbox{\tiny C-LP}}$ of C-LP is unique and integral then
solution $y^{\mbox{\tiny C-LP}^{\prime}}$ of C-LP$^{\prime}$ is also unique and integral, i.e.,
$y^{\mbox{\tiny C-LP}^{\prime}}=y^*$. Hence, \eqref{eq:mainthm} in Theorem \ref{thm:main} follows.
Furthermore, since $y^*=[y^*_e]$ is unique and integral,
there exists {$c^\prime = \Omega(c)$} such that
$$c^\prime:=\inf_{y\neq y^*: y~\mbox{\tiny feasible to C-LP}^{\prime}}\frac{w^{\prime}\cdot(y^*-y)}{|y^*-y|},$$
where $w^{\prime}=[w^{\prime}_e]$. These observations help us to establish the following lemma characterizing  performance of the max-product BP over C-GM$^\prime$.
\begin{lemma}\label{thm3}
	If solution $y^{\mbox{\tiny C-LP}^{\prime}}$ of C-LP$^{\prime}$ is integral and unique, i.e., $y^{\mbox{\tiny C-LP}^{\prime}}=y^*$, then
\begin{itemize}
\item if $y^*_e=1$, $n^t_e[1]>n^t_e[0]~$ for all $t>O\left(\frac{w^{\prime}_{\max}}{c^\prime}\right)$,
\item if $y^*_e=0$, $n^t_e[1]<n^t_e[0]~$ for all $t>O\left(\frac{w^{\prime}_{\max}}{c^\prime}\right)$,
\end{itemize}
where $n^t_e[\cdot]$ denotes the BP belief of edge $e$ at time $t$ under C-GM$^\prime$ and 
$w_{\max}^{\prime}=\max_{e\in E^{\prime}} \left|w^{\prime}_e\right|$.
\end{lemma}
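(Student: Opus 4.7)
The plan is to apply the computational tree technique, in the lineage of Bayati--Shah--Sharma and Sanghavi--Malioutov--Willsky, specialised to the new GM \eqref{eq:newMRF} whose factor structure satisfies the degree-two (DT) condition. Fix an edge $e\in E'$ and let $T_e(t)$ denote the depth-$t$ max-product computation tree rooted at the variable $y_e$. By standard unrolling of max-product, the belief $n^t_e[b]$ equals, up to a common normaliser, the maximum tree weight $\sum_{v\in T_e(t)} w'_{e(v)}\,\hat y_v$ over tree assignments $\hat y$ that satisfy every on-tree copy of $\psi_i$ and $\psi_C$ and have $\hat y_{\mathrm{root}}=b$. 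It therefore suffices to compare the two tree maxima at $b=0$ and $b=1$.

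Assume for contradiction that $y^*_e=1$ but $n^t_e[0]\geq n^t_e[1]$; the other case is symmetric. Let $\hat y^{\mathrm{lift}}$ be the copy of $y^*$ onto $T_e(t)$; since $y^*$ satisfies all local constraints globally, its lift is tree-feasible and has root value $1$. Let $\hat y^0$ be a tree-optimal assignment with root value $0$. Form the symmetric-difference set $\Delta=\{v\in T_e(t):\hat y^0_v\neq \hat y^{\mathrm{lift}}_v\}$. Under the DT condition each variable in $T_e(t)$ lies in at most two factor nodes, a matching factor $\psi_i$ and/or a cycle factor $\psi_C$, and flipping a single incident variable of either factor forces a paired flip at exactly one other incident variable in order to preserve feasibility: for $\psi_i$ this is immediate from $\sum_{e\in\delta(i)}y_e\leq 1$, and for $\psi_C$ it follows from the twisted parity condition $\sum_j(-1)^{d_C(j,e)}y_{i_C,j}\in\{0,2\}$. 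Hence $\Delta$ decomposes into node-disjoint alternating paths in $T_e(t)$ that terminate either at the root, at a leaf, or at an ``unmatched'' $\psi_i$-vertex.

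Next I would project these tree paths back down to $G'$. Each alternating path maps to a walk in $G'$, and the paired-flip structure shows that the signed indicator $\delta y$ of these walks, with entries $\pm\tfrac{1}{2}$ on cycle edges reflecting the $\tfrac{1}{2}$ factor in the C-LP/C-LP$^\prime$ correspondence, produces a genuinely feasible $\tilde y=y^*+\delta y$ for C-LP$^\prime$ that is distinct from $y^*$. By definition of the integrality gap $c$, this gives $w'\cdot(y^*-\tilde y)\geq c\,|y^*-\tilde y|$, i.e.\ a per-edge weight deficit of at least $c$ along every projected walk. On the other hand, tree-optimality of $\hat y^0$ and the decomposition of $\Delta$ force $w'(\hat y^0)-w'(\hat y^{\mathrm{lift}})$ to equal the total weight of the projected walks, up to a constant number of boundary corrections at the root and at the leaves, each of size at most $w'_{\max}$.

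Combining the two bounds, the walks span depth at least $t-O(1)$, contributing an advantage of at least $c(t-O(1))$ in favour of $\hat y^{\mathrm{lift}}$, while the boundary error is bounded by a small constant multiple of $w'_{\max}$ (the explicit constant $6$ comes from bookkeeping the root, the two possible truncations per endpoint, and a single correction per cycle factor crossed). Choosing $t>6w'_{\max}/c+6$ then yields $w'(\hat y^{\mathrm{lift}})>w'(\hat y^0)$, contradicting $n^t_e[0]\geq n^t_e[1]$. The delicate step will be the path decomposition of $\Delta$: the factor $\psi_C$ couples the $|C|$ variables $y_{i_C,j}$ in a much less local way than an ordinary matching factor, so one must verify explicitly that flipping along a tree path that interacts with $\psi_C$ preserves the $\{0,2\}$-parity condition and, under the projection, corresponds precisely to swapping between the two alternations of the odd cycle $C$ in $G$. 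This combinatorial observation is what makes the augmenting-path/LP-rounding argument go through for the transformed GM, where it fails for the standard one.
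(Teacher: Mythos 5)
Your high-level architecture matches the paper's: unroll to the computation tree $T_e(t)$, take a tree-optimal assignment with the ``wrong'' root value, perturb it toward $y^*$, project the perturbation down to a feasible point of C-LP$^{\prime}$, and convert the LP gap $c$ into a per-level weight advantage with $O(w^{\prime}_{\max})$ boundary corrections at the root and truncated leaves. However, there is a genuine gap at exactly the step you defer: your structural claim about the symmetric difference $\Delta$ is false for the cycle factors. You assert that flipping one variable incident to a factor ``forces a paired flip at exactly one other incident variable,'' so that $\Delta$ decomposes into node-disjoint \emph{alternating} paths. For $\psi_i$ this is the standard matching argument, but for $\psi_C$ the feasible configurations are the $0$-$1$ indicator vectors of unions of disjoint even paths in the cycle $C$; two such configurations can disagree in many of the $|C|$ coordinates of $\delta(i_C)$ at once, so the component of $\Delta$ through a cycle-factor node can branch arbitrarily rather than continue as a path, and a single flip at one $y_{i_C,j}$ is in general not repaired by a unique forced partner. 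The paper's proof does not use the symmetric-difference decomposition at all for this reason: it \emph{constructs} a single flip path $P$ greedily from the root, and its key combinatorial lemma (that a repairing child flip consistent with $y^*$ always exists, proved via the even-path characterization, e.g.\ by flipping a neighbor of the farthest endpoint of the broken odd path) is precisely the content your proposal leaves unproved. Note also that the resulting path is \emph{not} alternating --- consecutive flips can both be $1\to 0$ or both $0\to 1$ --- which breaks the bookkeeping you base on alternation.

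A second, related gap is the feasibility of the projected perturbation. You posit a signed indicator $\delta y$ with $\pm\tfrac12$ entries and assert $y^*+\delta y$ is feasible for C-LP$^{\prime}$, but the two-sided constraints $\sum_{j}(-1)^{d_C(j,e)}y_{i_C,j}\in[0,2]$ and $\sum_{e\in\delta(i_C)}y_e\le |C|-1$ must actually be verified for the chosen perturbation direction. The paper does this by an averaging argument: each copy of $\psi_C$ met by $P$ contributes a point of the local polytope $\mathcal P_C$ obtained by a double flip of $y^*_C$, and the average of these points equals $y^*_C$ plus the normalized flip counts, so the perturbation stays in $\mathcal P_C$ for $\varepsilon$ small (with a separate count $m$ absorbing at most three boundary corrections per truncated end, whence the constants $3$ and $6$). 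Your proof would be complete once you (i) replace the alternating-path decomposition by an explicit single-path flip construction through $\psi_C$ and prove such a flip always exists, and (ii) prove feasibility of the induced LP perturbation against the cycle constraints rather than asserting it.
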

Theorem \ref{thm:main} follows from this lemma directly.

\subsection{Proof of Lemma \ref{thm3}}

This section provides the complete proof of Lemma \ref{thm3} by contradiction.
Here we focus on the case of $y_e^{*}=1$, noticing that extending the proof arguments to the case of $y_e^*=0$ is straightforward. To derive a contradiction, let us assume that $n^t_e[1]\leq n^t_e[0]$ and
construct a tree-structured GM $T_e(t)$ of depth $t+1$, also known as
the computational tree of BP \cite{97W}, using the following scheme.
\begin{itemize}
	\item[1.] Add a copy of $Y_e\in\{0,1\}$ as the (root) variable (with variable function $e^{w_e^{\prime} Y_e}$).
	\item[2.] Repeat the following $t$ times for each leaf variable $Y_e$ for the current tree-structured GM.
	\begin{itemize}
		\item[2-1.] For each $i\in V$ such that $e\in \delta(i)$ and $\psi_i$ not associated with $Y_e$ of the current model, add $\psi_i$ as a factor (function) with
		copies of $\{Y_{e^{\prime}}\in\{0,1\}:e^{\prime}\in \delta(i)\setminus e\}$
	as child variables (with corresponding variable functions, i.e., $e^{w_{e^{\prime}}^{\prime} Y_{e^{\prime}}}$).
	    \item[2-2.] For each $C\in \mathcal C$ such that $e\in \delta(i_C)$ and $\psi_C$ not associated with $Y_e$ of the current model, add $\psi_C$ as a factor (function) with copies of
	$\{Y_{e^{\prime}}\in\{0,1\}:e^{\prime}\in \delta(i_C)\setminus e\}$
as the child variables (with corresponding variable functions, i.e., $e^{w_{e^{\prime}}^{\prime} Y_{e^{\prime}}}$).
\end{itemize}
\end{itemize}
It is  known \cite{97W} that there exists a MAP configuration $y^{\mbox{\tiny TMAP}}$ of $T_e(t)$ with $y^{\mbox{\tiny TMAP}}_e$ fixed to $0$ at the root variable.
Our goal is to find a new 
assignment $y^{\mbox{\tiny NEW}}$ on the computational tree $T_e(t)$
such that
$w^\prime \cdot y^{\mbox{\tiny NEW}} > w^\prime \cdot y^{\mbox{\tiny TMAP}}$.
This contradicts to the definition of $y^{\mbox{\tiny TMAP}}$ being a MAP configuration
and completes the proof of
Lemma \ref{thm3}.
In particular, we construct the following 
new assignment $y^{\mbox{\tiny NEW}}$ on the computational tree $T_e(t)$.

\begin{enumerate}
	\item[1.] Initially, set $y^{\mbox{\tiny NEW}} \leftarrow y^{\mbox{\tiny TMAP}}$ and denote $e$ as the root of the tree.
	\item[2.] Update $y^{\mbox{\tiny NEW}} \leftarrow {\tt FLIP}_e(y^{\mbox{\tiny NEW}})$, where ${\tt FLIP}_e(y)$ is the 0-1 (i.e., binary) vector made by flipping (i.e., changing from $0$ to $1$ or $1$ to $0$) the $e$'s position in $y$. 
	\item[3.] For the (unique) child factor $\psi$ associated with $e$, perform the following updates on
	$y^{\mbox{\tiny NEW}}$, where
	we say $\psi$ is satisfied by some assignment $y$ if $\psi(y)=1$.
	\begin{enumerate}
\item[3-1.] 
If $\psi$ is satisfied by both $y^{\mbox{\tiny NEW}}$ and ${\tt FLIP}_e(y^{*})$, then do nothing.
	    \item[3-2.] Else if there exists a $e$'s child $e^{\prime}$ through factor $\psi$ such that $y^{\mbox{\tiny NEW}}_{e^{\prime}}\neq y^*_{e^{\prime}}$ and
$\psi$ is satisfied by both ${\tt FLIP}_{e^{\prime}}(y^{\mbox{\tiny NEW}})$ and ${\tt FLIP}_{e^{\prime}}({\tt FLIP}_{e}(y^{*}))$, then
go to Step 2 with $e\leftarrow e^{\prime}$.
\item[3-3.] Otherwise, report ERROR.
	\end{enumerate}
\end{enumerate}
We note that 
there exists exactly one child factor $\psi$ in Step 3,
while we choose one child $e^{\prime}$ in Step 3-2 among many possible candidates. Due to this reason, the flip operations induce a path structure $P$ in tree $T_e(t)$.\footnote{$P$ may not have an alternating structure, i.e.,
both $y^{\mbox{\tiny NEW}}_e$ and its child $y^{\mbox{\tiny NEW}}_{e^{\prime}}$ can be flipped in a same way.}
\begin{figure*}[ht]
\begin{center}
\includegraphics[width=1\textwidth]{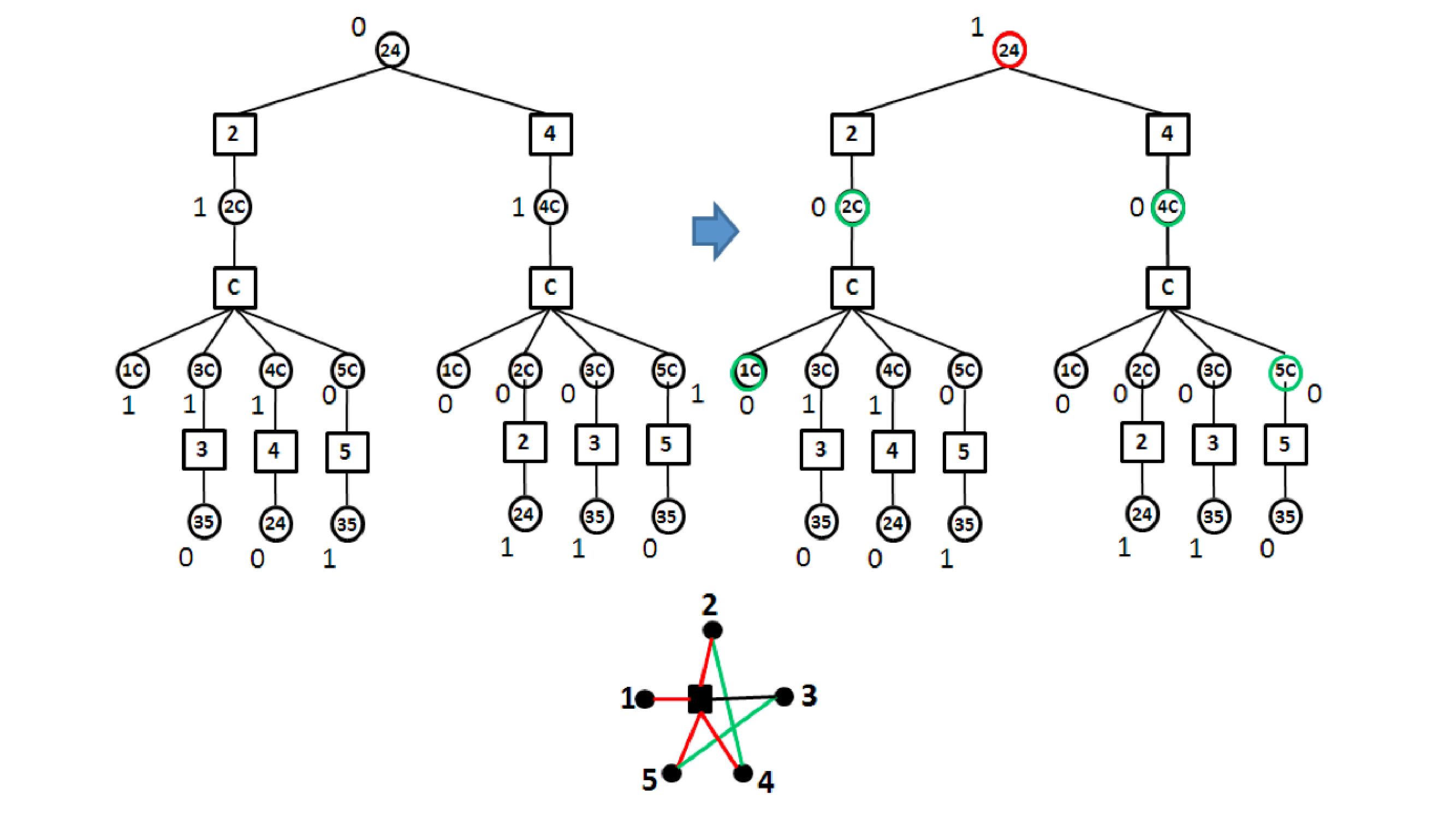}
\end{center}
\caption{\small Example of $y^{\mbox{\tiny TMAP}}$ (left) and $y^{\mbox{\tiny NEW}}$ (right), where $y^*_{1,i_C}=0$, $y^*_{2,i_C}=0$, $y^*_{3,i_C}=0$, $y^*_{4,i_C}=0$, $y^*_{5,i_C}=0$, $y^*_{3,5}=1$ and $y^*_{2,4}=1$.}
\label{fig:computation_tree}
\end{figure*}
Now we state the following lemma, playing a pivotal role for construction of $y^{\mbox{\tiny NEW}}$.
\begin{lemma}\label{lem:valid}
{\em ERROR} is never reported in the construction described above.
\end{lemma}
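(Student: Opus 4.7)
The plan is to prove Lemma \ref{lem:valid} by case analysis on the type of child factor $\psi$ encountered in step 3, exploiting the DT condition crucially: every edge in the new GM appears in at most two factors, one of which is the ``parent'' factor inherited from the tree construction, so step 3 need only examine at most one ``child'' factor (two at the root). I will maintain the invariant that upon entering step 3 at edge $e$, one has $y^{\mbox{\scriptsize NEW}}_e = y^*_e$. This holds at the root because $y^{\mbox{\scriptsize TMAP}}_e = 0$ and $y^*_e = 1$ so step 2 aligns them, and it is preserved recursively because step (b) only transitions to an edge $e'$ with $y^{\mbox{\scriptsize NEW}}_{e'} \neq y^*_{e'}$, making the next step 2 re-align them.

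\textbf{Case A: }$\psi = \psi_i$ for some $i \in V$. Both $y^*$ and the pre-flip $y^{\mbox{\scriptsize NEW}}$ satisfy $\psi_i$, i.e., each has at most one active edge in $\delta(i)$. The sign of $y^*_e$ determines which configuration the flip at $e$ can disturb: if $y^*_e=1$, then $\psi_i({\tt FLIP}_e(y^*))=1$ is automatic and only $y^{\mbox{\scriptsize NEW}}$ can become infeasible, producing a unique offending edge $e^*\in\delta(i)\setminus\{e\}$ with $y^{\mbox{\scriptsize NEW}}_{e^*}=1$ and $y^*_{e^*}=0$ (forced by $\psi_i(y^*)=1$ together with $y^*_e=1$). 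Flipping $e^*$ then simultaneously restores $\psi_i$ in both configurations. The symmetric case $y^*_e=0$ puts the unique offending edge on the $y^*$ side instead. In either situation step (b) succeeds.

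\textbf{Case B: }$\psi = \psi_C$ for some $C \in \mathcal C$. This is structurally richer because $\psi_C$ encodes both the degree bound $\sum_{e\in\delta(i_C)}y_e\le|C|-1$ and the alternating-sum constraints $\sum_{j\in V(C)}(-1)^{d_C(j,e'')}y_{i_C,j}\in\{0,2\}$ for every $e''\in E(C)$, so a single flip at $e=(i_C,j)$ can perturb many constraints at once. To identify a single compensating $e'=(i_C,j')$ that restores \emph{all} these constraints for both $y^{\mbox{\scriptsize NEW}}$ and ${\tt FLIP}_e(y^*)$, I plan to use the one-to-one correspondence (via the linear map between $x$ and $y$) between feasible $\psi_C$-configurations and near-matchings of the odd cycle $C$: $y_{i_C,j}=1$ iff $j$ is incident to an active $E(C)$-edge. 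Under this correspondence a flip at $(i_C,j)$ reinterprets as toggling $j$'s cover-status on $C$, which shifts the unique uncovered vertex along $C$; the required $j'$ is then dictated by where this vertex is pushed. The sign identity $(-1)^{d_C(j,e'')}=-(-1)^{d_C(j',e'')}$ for $e''\neq(j,j')$ when $j,j'$ are $C$-adjacent certifies that the single flip at $(i_C,j')$ simultaneously repairs all perturbed alternating-sum constraints, and the odd parity of $|C|$ is what preserves the degree bound.

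The principal obstacle is Case B: a single flip at $(i_C,j)$ simultaneously perturbs all $|E(C)|$ alternating-sum constraints, so the existence of one compensating $e'$ that fixes everything — in both configurations at once — is not apparent and relies essentially on the near-matching interpretation on the odd cycle and the sign identity above. Once Cases A and B are both dispatched, step 3 never falls through to case (c), so ERROR is never reported. The resulting $y^{\mbox{\scriptsize NEW}}$ then differs from $y^{\mbox{\scriptsize TMAP}}$ precisely along a path $P$ in $T_e(t)$, which is exactly the structural input required for the subsequent weight argument to derive a contradiction with $y^{\mbox{\scriptsize TMAP}}$'s maximality on the computation tree in the proof of Lemma \ref{thm3}.
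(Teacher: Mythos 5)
Your skeleton (case split on $\psi_i$ versus $\psi_C$, the DT condition giving a unique child factor off the root, and the invariant that step 2 re-aligns $y^{\mbox{\scriptsize NEW}}_e$ with $y^*_e$) matches the paper, and Case A is fine. The gap is in Case B, where your structural picture of the feasible $\psi_C$-configurations is wrong. A configuration on $\delta(i_C)$ satisfying $\psi_C$ corresponds to an \emph{arbitrary} matching of the cycle $C$ of size at most $(|C|-1)/2$, not a near-perfect one: equivalently, the set $\{j : y_{i_C,j}=1\}$ is a union of disjoint even-length runs of consecutive vertices around $C$, and there may be many uncovered vertices (e.g., the all-zero configuration is feasible). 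So there is no ``unique uncovered vertex'' to shift, and the compensating edge $e'=(i_C,j')$ is in general \emph{not} a $C$-neighbor of $j$, which invalidates the adjacency-based sign identity you invoke as the certificate that all alternating-sum constraints are repaired at once.

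The argument that actually closes this case (and is what the paper does) is the following: flipping $y^{\mbox{\scriptsize NEW}}_{(i_C,j)}$ from $1$ to $0$ splits the even run containing $j$ into an even (possibly empty) piece and an odd (nonempty) piece. One must then distinguish two subcases: either (a) there is a vertex $j'$ inside the odd piece with $y^{\mbox{\scriptsize NEW}}_{(i_C,j')}=1$ but $y^*_{(i_C,j')}=0$ whose $1\to0$ flip splits the odd piece into two even pieces, or (b) no such vertex exists, in which case one takes $j'$ to be the neighbor of the far endpoint of the odd piece; there $y^{\mbox{\scriptsize NEW}}_{(i_C,j')}=0$ while $y^*_{(i_C,j')}=1$ is forced because $y^*$ is itself a union of even runs, so the $0\to1$ flip extends the odd piece to an even one. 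In both subcases $j'$ may lie far from $j$ along the cycle, and the same run-parity reasoning must be applied separately to ${\tt FLIP}_{e'}({\tt FLIP}_e(y^*))$ to verify the second half of condition (b) of step 3. Your proposal, as written, would fail on any $y^{\mbox{\scriptsize TMAP}}$ whose restriction to $\delta(i_C)$ is not a near-perfect matching, so the key existence claim for $e'$ remains unproved.
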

\begin{proof}
Proving the case when $\psi=\psi_i$ in Step 3 is easy/straightforward, therefore we only discuss here the proof for the case when $\psi=\psi_C$. Let us also assume that $y^{\mbox{\tiny NEW}}_e$ is flipped as $1\to 0$ (i.e., $y_e^*=0$), where the proof for the case $0\to 1$ follows in a similar manner. First, one can observe that $y$ satisfies $\psi_C$ if and only if $y$ is the 0-1 indicator vector of a union of disjoint even paths in the cycle $C$. Since $y^{\mbox{\tiny NEW}}_e$ is flipped as $1\to 0$, an even path including $e$ is broken into an even (possibly, empty) path and an odd (always, non-empty) path. 
We consider two cases: (a) when there exists $e^{\prime}$ within the odd path (i.e., $y^{\mbox{\tiny NEW}}_{e^{\prime}}=1$) such that $y_{e^{\prime}}^{*}=0$ and flipping
$y^{\mbox{\tiny NEW}}_{e^{\prime}}$ with $1\to0$ brakes the odd path into two even (disjoint) paths;
(b) there exists no such $e^{\prime}$ within the odd path.

In the case (a) it is easy to see that we can maintain the structure of disjoint even paths in $y^{\mbox{\tiny NEW}}$ after flipping $y^{\mbox{\tiny NEW}}_{e^{\prime}}$ as $1\to 0$, i.e.,
$\psi$ is satisfied by ${\tt FLIP}_{e^{\prime}}(y^{\mbox{\tiny NEW}})$. In the case (b), let us choose $e^{\prime}$ as a neighbor of the farthest end point (from $e$) in the odd path, i.e., $y^{\mbox{\tiny NEW}}_{e^{\prime}}=0$ (before flipping). Then, $y_{e^{\prime}}^{*}=1$ since $y^*$ satisfies factor $\psi_C$ and induces a union of disjoint even paths in the cycle $C$. Therefore, if $y^{\mbox{\tiny NEW}}_{e^{\prime}}$ is flipped as $0\to 1$, one can still maintain the structure of disjoint even paths in $y^{\mbox{\tiny NEW}}$, where $\psi$ is satisfied by ${\tt FLIP}_{e^{\prime}}(y^{\mbox{\tiny NEW}})$. The proof for the case of the $\psi$ satisfied by ${\tt FLIP}_{e^{\prime}}({\tt FLIP}_{e}(y^{*}))$ is similar. This completes the proof of Lemma \ref{lem:valid}.
\end{proof}

Due to the construction of $y^{\mbox{\tiny NEW}}$, it is a valid configuration, i.e., it satisfies all the factor functions in $T_e(t)$. 
Now, for $(i,j)\in E^{\prime}$, let $n_{ij}^{0\to 1}$ and $n_{ij}^{1\to 0}$ be the number of flip operations $0\to 1$ and $1\to 0$ for copies of $(i,j)$ in Step 2
of the construction of $y^{\mbox{\tiny NEW}}$. Then, one derives
$$w^{\prime} \cdot y^{\mbox{\tiny NEW}} =w^{\prime} \cdot y^{\mbox{\tiny TMAP}} +w^{\prime} \cdot n^{0\to 1} - w^{\prime} \cdot n^{1\to 0},$$
where $n^{0\to 1}=[n_{ij}^{0\to 1}]$ and $n^{1\to 0}=[n_{ij}^{1\to 0}]$.
We consider two cases: (i) when the path $P$ does not arrive at a leaf variable of $T_e(t)$, and (ii) otherwise.
Note that the case (i) is possible only when the condition in Step 3-1 holds during the construction of $y^{\mbox{\tiny NEW}}$. In both cases, we will show
$w^{\prime} \cdot y^{\mbox{\tiny NEW}} > w^{\prime} \cdot y^{\mbox{\tiny TMAP}}$.

{\bf Case (i).}
In this case, we define 
$y^{\dagger}_{ij}:=y^*_{ij} +\varepsilon (n_{ij}^{1\to 0}-n_{ij}^{0\to 1}),$
and establish the following lemma.
\begin{lemma}\label{lem:feasible-1}
$y^{\dagger}$ is a feasible solution of C-LP$^{\prime}$ if $\varepsilon >0$ is sufficiently small.
\end{lemma}
\begin{proof}
We have to show that $y^{\dagger}$ satisfies \eqref{eq1:clp'} and \eqref{eq2:clp'}.
Here, we prove that $y^{\dagger}$ satisfies \eqref{eq2:clp'} when $\varepsilon>0$ is sufficiently small. Then
the proof for \eqref{eq1:clp'} follows from a similar consideration. To this end, for given $C\in \mathcal C$, we consider the following polytope $\mathcal P_C$ :
$$\sum_{j\in V(C)} y_{i_C,j} \leq |C|-1,\quad y_{i_C,j}\in[0,1],\quad\forall j\in C,
$$
$$\sum_{j\in V(C)} (-1)^{d_C(j,e)} y_{i_C,j}\in [0,2],\quad\forall e\in E(C).$$
One has to show that $y^{\dagger}_C=[y_e:e\in\delta(i_C)]$ is within the polytope. It is easy to see that the condition of Step 3-1 never holds if $\psi=\psi_C$ in Step 3.
For the $i$-th copy of $\psi_C$ in $P \cap T_e(t)$, we set $y^*_C(i) = {\tt FLIP}_{e^{\prime}}({\tt FLIP}_{e}(y^{*}_C))$ in Step 3-2,
where $y^*_C(i)\in \mathcal P_C$. Since the path $P$ does not end at a leaf variable of $T_e(t)$, one finds that
$$\frac1N \sum_{i=1}^N y^*_C(i) = y^*_C + \frac1N \left(n_{C}^{1\to 0}-n_{C}^{0\to 1}\right),$$
where $N$ is the number of copies of $\psi_C$ in $P \cap T_e(t)$.
Furthermore, $\frac1N \sum_{i=1}^N y^*_C(i)\in \mathcal P_C$ due to $y^*_C(i)\in \mathcal P_C$. Therefore,
$y^{\dagger}_C\in \mathcal P_C$ if $\varepsilon \leq 1/N$. This completes the proof of Lemma \ref{lem:feasible-1}.
\end{proof}
The above lemma with $w^{\prime} \cdot y^*> w^{\prime} \cdot y^{\dagger}$ (due to the uniqueness of $y^*$) implies that
$w^{\prime} \cdot n^{0\to 1} > w^{\prime} \cdot n^{1\to 0}$, which leads to $w^{\prime} \cdot y^{\mbox{\tiny NEW}} > w^{\prime} \cdot y^{\mbox{\tiny TMAP}}$.

{\bf Case (ii).} Let us consider the case when
only one end of $P$ ends at a leaf variable $Y_e$ of $T_e(t)$.
Similar arguments applies to prove the other case
when both ends of $P$ end at leaves. 
We first define 
$y^{\ddagger}_{ij}:=y^*_{ij} +\varepsilon (m_{ij}^{1\to 0}-m_{ij}^{0\to 1}),$
where $m^{1\to 0}=[m_{ij}^{1\to 0}]$ and $m^{0\to 1}=[m_{ij}^{0\to 1}]$ is constructed as follows:
\begin{itemize}
	\item[1.] Initially, set $m^{1\to 0}, m^{0\to 1}$ by $n^{1\to 0}, n^{0\to 1}$.

\item[2.] If $y_e^{\mbox{\tiny NEW}}$ is flipped as $1\to 0$ and it is associated to a cycle parent factor $\psi_C$ for some $C\in \mathcal C$, then
decrease $m^{1\to 0}_e$ by 1 and
\begin{itemize}
\item[2-1.] If the parent $y_{e^{\prime}}^{\mbox{\tiny NEW}}$ is flipped from $1\to 0$, then decrease $m_{e^{\prime}}^{1\to0}$ by 1.
\item[2-2.] Else if there exists a `brother' edge $e^{\prime\prime}\in \delta(i_C)$ of $e$ such that $y^*_{e^{\prime\prime}}=1$ and
$\psi_C$ is satisfied by ${\tt FLIP}_{e^{\prime\prime}}({\tt FLIP}_{e^{\prime}}(y^{*}))$, then increase
$m_{e^{\prime\prime}}^{0\to 1}$ by 1.
\item[2-3.] Otherwise, report ERROR.
\end{itemize}
\item[3.] If $y_e^{\mbox{\tiny NEW}}$ is flipped as $1\to 0$ and it is associated to a vertex parent factor $\psi_i$ for some $i\in V$, then
decrease $m^{1\to 0}_e$ by 1.

\item[4.] If $y_e^{\mbox{\tiny NEW}}$ is flipped as $0\to 1$ and it is associated to a vertex parent factor $\psi_i$ for some $i\in V$, then
decrease $m^{0\to 1}_e,m^{1\to 0}_{e^{\prime}}$ by 1,
where $e^{\prime}\in \delta(i)$ is the `parent' edge of $e$, and
\begin{itemize}
\item[4-1.] If the parent $y_{e^{\prime}}^{\mbox{\tiny NEW}}$ is associated to a cycle parent factor $\psi_C$,
\begin{itemize}
\item[4-1-1.] If the grad-parent $y_{e^{\prime\prime}}^{\mbox{\tiny NEW}}$  is flipped from $1\to 0$, then decrease $m_{e^{\prime\prime}}^{1\to0}$ by 1.
\item[4-1-2.] Else if there exists a `brother' edge $e^{\prime\prime\prime}\in \delta(i_C)$ of $e^{\prime}$ such that $y^*_{e^{\prime\prime\prime}}=1$ and
$\psi_C$ is satisfied by ${\tt FLIP}_{e^{\prime\prime\prime}}({\tt FLIP}_{e^{\prime\prime}}(y^{*}))$, then increase
$m_{e^{\prime\prime\prime}}^{0\to 1}$ by 1.
\item[4-1-3.] Otherwise, report ERROR.
\end{itemize}
\item[4-2.] Otherwise, do nothing.
\end{itemize}
\end{itemize}
We establish the following lemmas.
\begin{lemma}\label{lem:valid-2}
{\em ERROR} is never reported in above construction.
\end{lemma}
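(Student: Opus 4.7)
The plan is to show that neither ERROR branch (step 2-3 nor step 4-1-3) is reachable, using the combinatorial description of $\psi_C$ established in the proof of Lemma~\ref{lem:valid}: $\psi_C(y)=1$ if and only if $y|_{\delta(i_C)}$ is the indicator vector of a union of disjoint even-length paths in the cycle $C$. Set $S := \{j\in V(C) : y^*_{(i_C,j)}=1\}$, so $S$ is such a union, and for each $f\in\delta(i_C)$ write $j_f$ for the $V(C)$-endpoint of $f$. A second standing fact, extracted from step (b) of the $y^{\mbox{\scriptsize NEW}}$ construction, is that every edge $f$ on the path $P$ satisfies $y^{\mbox{\scriptsize TMAP}}_f \neq y^*_f$; consequently a $1\to 0$ flip on $P$ forces $y^*_f=0$, and a $0\to 1$ flip forces $y^*_f=1$.

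For step 2-3, the leaf $e$ is flipped $1\to 0$ under cycle parent factor $\psi_C$ with parent edge $e'$; the failure of step 2-1 forces $e'$ to have been flipped $0\to 1$, so $y^*_{e'}=1$ and $j_{e'}\in S$. To exhibit a brother $e''=(i_C,u)$ certifying step 2-2, I must produce $u\in S\setminus\{j_{e'}\}$ such that $S\setminus\{j_{e'},u\}$ is again a union of disjoint even paths; then $\psi_C$ is satisfied by ${\tt FLIP}_{e''}({\tt FLIP}_{e'}(y^*))$, and the requirements $e''\neq e$ and $e''\neq e'$ reduce to $u\neq j_e$ and $u\neq j_{e'}$, which hold automatically since $y^*_e=0$ forces $j_e\notin S$ while $u\in S\setminus\{j_{e'}\}$ excludes $j_{e'}$. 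The side constraint $\sum_{f\in\delta(i_C)}y_f\leq|C|-1$ is preserved, as the sum strictly decreases.

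Existence of $u$ is a short case analysis on where $j_{e'}$ sits inside its even-length component $Q\subseteq S$: if $|Q|=2$ take $u$ to be the other endpoint of $Q$; if $j_{e'}$ is an endpoint of a longer $Q$ take $u$ to be its neighbor inside $Q$; and if $j_{e'}$ is interior to $Q$, removing $j_{e'}$ splits $Q$ into two subpaths of opposite parities, so taking $u$ as the neighbor of $j_{e'}$ on the odd subpath restores evenness of every component. Step 4-1-3 is handled by the same analysis after relabeling: the grandparent $e''$ plays the role of $e'$ and the sought brother $e'''$ plays the role of $e''$. The failure of step 4-1-1 gives $y^*_{e''}=1$, and the fact that the parent $e'$ was a $1\to 0$ flip on $P$ (which justifies the decrement of $m^{1\to 0}_{e'}$ in step~4) yields $y^*_{e'}=0$, so the $e'''$ produced by the case analysis automatically differs from $e'$ because $j_{e'}\notin S$.

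The main obstacle is the case analysis on the even-length components of $S$; though short, it carries the combinatorial content of the lemma, while the rest is bookkeeping driven by the path mismatch condition $y^{\mbox{\scriptsize TMAP}}\neq y^*$ on $P$. A subordinate subtlety is to check that the candidate $u$ never coincides with a forbidden vertex ($j_e$, $j_{e'}$, or $j_{e''}$ as appropriate), which the mismatch condition supplies at no cost.
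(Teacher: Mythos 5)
Your argument is correct and follows the route the paper intends: the paper's own ``proof'' of Lemma~\ref{lem:valid-2} is just the remark that it is analogous to Lemma~\ref{lem:valid}, and your write-up carries out exactly that analogy, using the same characterization of $\psi_C$-satisfying assignments as indicator vectors of disjoint unions of even-length runs in $C$ and then exhibiting the required brother edge by a parity case analysis on the run containing $j_{e'}$ (resp.\ $j_{e''}$). The bookkeeping about which flip direction the parent/grandparent took and the non-coincidence checks ($u\neq j_e, j_{e'}, j_{e''}$) are also consistent with the construction of $m^{0\to 1},m^{1\to 0}$ in the paper.
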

\begin{lemma}\label{lem:feasible-2}
$y^{\ddagger}$ is a feasible solution of C-LP$^{\prime}$ if $\varepsilon >0$ is sufficiently small.
\end{lemma}
We omit the proofs of Lemma \ref{lem:valid-2} and Lemma \ref{lem:feasible-2} as
they are analogous to those of Lemma \ref{lem:valid} and Lemma \ref{lem:feasible-1}, respectively.
From Lemma \ref{lem:feasible-2}, one deduces
\begin{align*}
c^\prime&\leq \frac{w^{\prime}\cdot(y^*-y^{\ddagger})}{|y^*-y^{\ddagger}|}\\
&\leq
\frac{\varepsilon \left(w^{\prime}\cdot (m^{0\to 1}-m^{1\to 0})\right)}{\varepsilon (t-3)}\\
&\leq\frac{\varepsilon \left(w^{\prime}\cdot (n^{0\to 1}-n^{1\to 0})+3w^{\prime}_{\max}\right)}{\varepsilon (t-3)},
\end{align*}
where $|y^*-y^{\ddagger}|\geq \varepsilon(t-3)$ follows from the fact that $P$ hits a leaf variable of $T_e(t)$ and there are at most three
increases or decreases in $m^{0\to 1}$ and $m^{1\to 0}$ in the above  construction.
Hence,
$w^{\prime} \cdot (n^{0\to 1}-n^{1\to 0})\geq c^\prime(t-3) - 3w^{\prime}_{\max} >0$ 
if $t>\frac{3w^{\prime}_{\max}}{c^\prime}+3 = O\left(\frac{w^{\prime}_{\max}}{c^\prime}\right),$
which implies $w^{\prime} \cdot y^{\mbox{\tiny NEW}} > w^{\prime} \cdot y^{\mbox{\tiny TMAP}}$. 
This completes the proof of Lemma \ref{thm3}.

\section{Cutting-plane Algorithm using Belief Propagation}\label{sec:cutting}

As established in the preceding section, BP on a carefully designed GM, built with  appropriate odd cycle extensions,
solves the MWM problem as long as the corresponding MWM-LP relaxation is tight. 
However, finding a proper collection of odd cycles to ensure tightness of the MWM-LP with respect to
given weights is a challenging task. 
In this section, we provide a heuristic, coined Cutting-Plane method using BP (CP-BP), for tackling this challenging task. 

\subsection{CP-BP using Half-Integrality}

CP-BP consists of making sequential, ``cutting plane'', modifications to the underlying GM using the output of the BP algorithm in the previous step. CP-BP is defined as follows:
\begin{itemize}
	\item[1.] Initialize $\mathcal{C} = \emptyset$.
	\item[2.] Run BP on C-GM$^\prime$ for $T$ iterations, where we choose $T=500$ in all our experiments.
	\item[3.] For each edge $e\in E$, set
	\begin{equation*}
	   y_e=\begin{cases}1&\mbox{if}~n^T_e[1]>n^T_e[0]~\mbox{and}~n^{T-1}_e[1]>n^{T-1}_e[0]\\
	0&\mbox{if}~n^T_e[1]<n^T_e[0]~\mbox{and}~n^{T-1}_e[1]<n^{T-1}_e[0]\\
	1/2&\mbox{otherwise}
	\end{cases}.
	\end{equation*}
	\item[4.] Compute $x=[x_e]$ using $y=[y_e]$ as per \eqref{eq:mainthm}, and terminate if $x\notin \{0,1/2,1\}^{|E|}$.
	\item[5.] If there is no edge $e$ with $x_e=1/2$, return $x$.
	Else if a non-intersecting odd cycle of edges $\{e:x_e=1/2\}$ exists, it is added to $\mathcal C$ and go to Step $2$. Otherwise, terminate.
\end{itemize}
Notice that in this procedure, BP can be replaced by an LP solver to obtain $x$ in Step 4. This results in a traditional cutting-plane LP (CP-LP) method for the MWM problem \cite{GH85}. The primary reason why we design CP-BP to terminate when $x\notin \{0,1/2,1\}^{|E|}$ is because
the solution $x$ of C-LP is always half-integral
as shown in the following section.
Note that $x\notin \{0,1/2,1\}^{|E|}$ occurs in CP-BP when BP does not find the solution of C-LP.
In our experiments, we observe that it often occurs primarily
because BP is only guaranteed to find an integral (instead of half-integral) 
solution of C-LP.

\begin{figure*}[t!]
\begin{center}
\hspace{0.1in}
\includegraphics[width=0.48\textwidth]{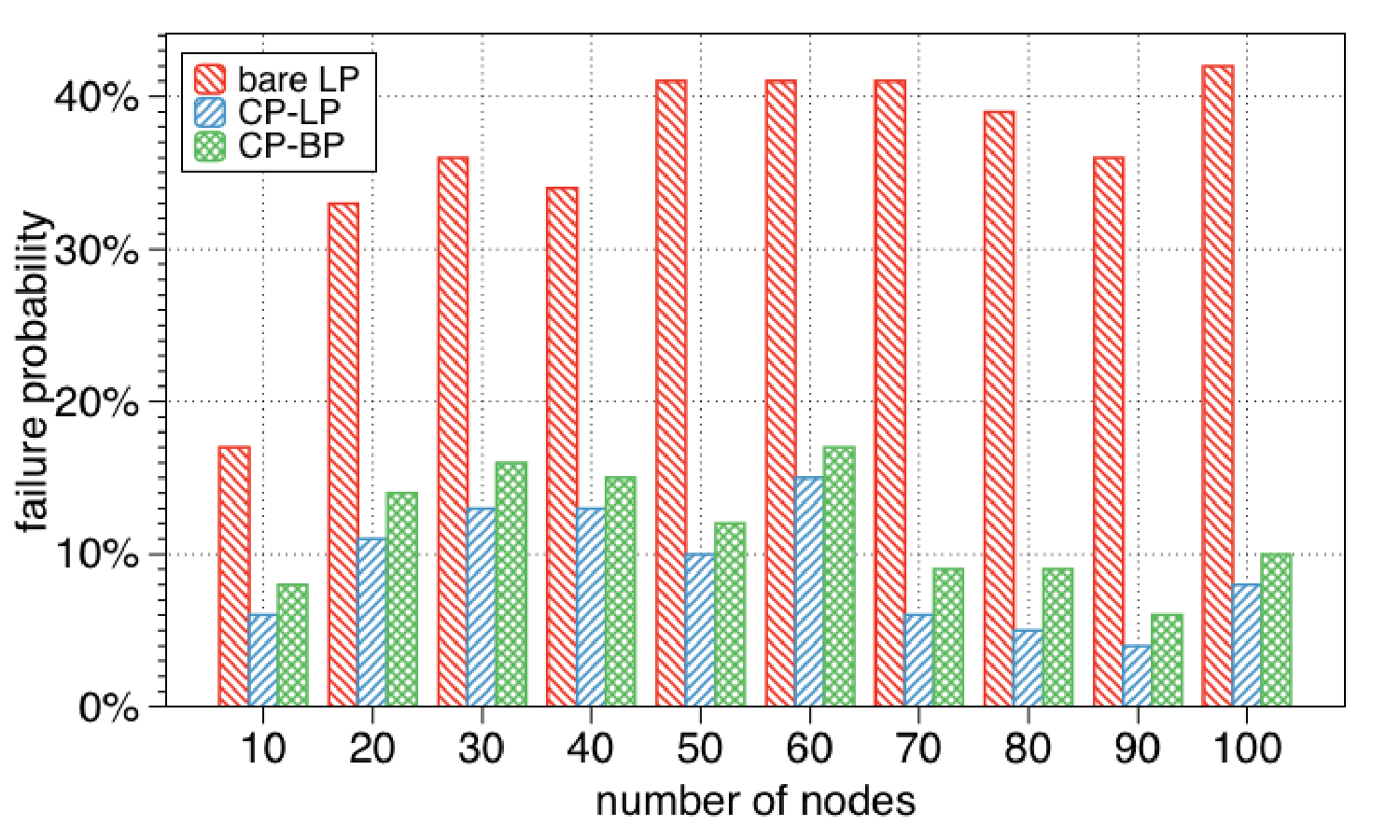}
\hfill
\includegraphics[width=0.48\textwidth]{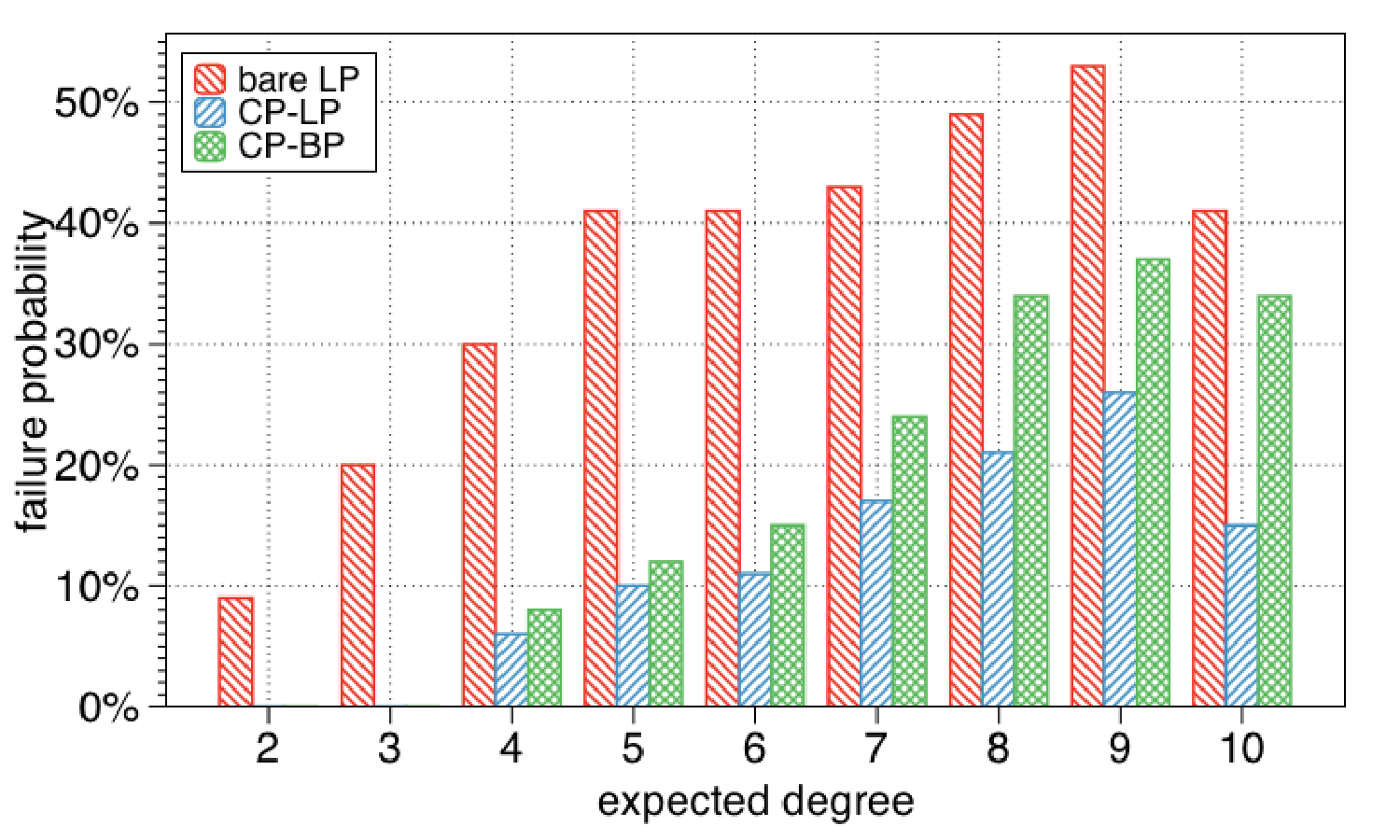}
\hspace{0.1in}
\end{center}
\vspace{-0.4in}
\caption{Evaluation of CP-BP and CP-LP on random MWM instances with varying number of nodes (left) and expected degree of vertices (right). 
The bare LP corresponds to the initial MWM-LP without cycle constraints, i.e., $\mathcal{C}=\emptyset$. 
Each plot is averaged over $100$ instances.
}
\label{fig:experiment}
\end{figure*}

In the following experiments we compare CP-BP and CP-LP in order to gauge the effectiveness of BP as an LP solver for MWM problems - i.e., to test if BP is as powerful as an LP solver over the class of problems considered. 
To this end, we create each 
random MWM instances on Erd\H{o}s-Renyi (ER) graphs, i.e.,
sparsify a complete graph 
by eliminating edges with fixed probability $p\in (0,1)$ independently 
then assign integral weights, 
drawn uniformly from $[1,10^{6}]$ independently, to the remaining edges. 
For the performance measure, we compare the ratio (i.e., percentage) of MWM instances where the respective algorithm fails to find the exact solution. 
In the first experiment, we vary the number of nodes $|V|$ while fixing the average degree of the graph to $5$, i.e., $(1-p)(|V|-1)=5$. 
In the second experiment, we fix the number of nodes to $50$, and control the sparsity of the ER graph by varying the expected degree of each nodes, i.e., $(1-p)(|V|-1)$.
Our experimental results are summarized in Figure~\ref{fig:experiment}. The results show that 
BP is quite effective at finding even half-integral solutions of C-LP as
CP-BP is almost as good as CP-LP for solving the MWM problem:
our graphical transformation allows BP to solve significantly more MWM problems than are solvable by bare 
LP (or BP) run without additional cutting-plane schemes. 
\subsection{Proof for Half-Integrality of C-LP}
In this section, we show that there always exists a half-integral solution of C-LP.
To this end, it suffices to show that every vertex in the constraint polytope of C-LP is half-integral.
Let $x\notin\{0,1/2,1\}^{|V|}$ be a feasible point in the constraint polytope of C-LP, and define the following notation.
\begin{eqnarray*}
E_x& =& \left\{ e\in E: x_e\in (0,1) \right\}\\
V_x &=& \big\{v\in V: \sum_{e\in\delta(v)} x_e=1\big\}\\
\mathcal C_x &=& \bigg\{C\in \mathcal C : \sum_{e\in E(C)} x_e=\frac{|C|-1}2\bigg\}
\end{eqnarray*}
Namely, $E_x$ is the set of non-integral edges and $V_x$ (or $\mathcal C_x$)
is the set of `tight' vertices (or cycles).
Our goal is to show that $x$ is not a vertex of the polytope.
We first state the following key lemma.
\begin{lemma}\label{lem:cx}
For every $C\in\mathcal{C}_x$ such that $E(C)\cap E_x\ne\emptyset$, the following statements hold:
\begin{itemize}
\item[(a)] If there exists $(u,v)\in C$ such that $x_{(u,v)}=0$, then every $w\in V(C)\setminus\{u,v\}$
having  
an odd graph distance from $v$ (or $u$) satisfies \begin{equation*}
\sum_{e\in\delta(w)\cap E(C)}x_{e}=1.
\end{equation*}
\item[(b)] Otherwise, there exist at least two vertices $v_1,v_2\in V(C)$ such that \begin{equation*}
    \sum_{e\in\delta(v_i)\cap E(C)}x_{e}<1
\end{equation*} for all $i\in\{1,2\}$. In particular, $x_e\in E_x$ for all $e\in E(C)$.
\end{itemize}
\end{lemma}
\begin{proof}
One can observe that (a) is trivial to hold as $C\in\mathcal C_x$.
Now consider the case (b), i.e., there does not exist zero edge in $C$.
If $\sum_{e\in\delta(v)\cap E(C)}x_{e}=1$ for all $v\in V(C)$, then 
\begin{equation*}
\sum_{e\in E(C)}x_e=\sum_{v\in V(C)}\sum_{e\in\delta(v)\cap E(C)}x_{e}=|C|/2,
\end{equation*}
which violates the assumption that $x$ is a feasible point in the constraint polytope of C-LP.
Similarly, one can argue the case when only one vertex $v\in V(C)$ satisfies $\sum_{e\in\delta(v)\cap E(C)}x_{e}<1$.
This completes the proof of (b). 
\end{proof}

In what follows, we will show that $x$ is not a vertex, i.e.,
there exist two different feasible points $y=[y_e],z=[z_e]$ satisfying $x = (y+z)/2$. To this end, we introduce the following lemma.
\begin{lemma}\label{lem:p}
There exists a walk $\mathcal W\ne\emptyset$ along $E_x$ satisfying the followings:
\begin{itemize}
\item[(a)] There exists $e\in E(\mathcal W)$ such that $x_e\notin\{0,1/2,1\}$.
\item[(b)] $\mathcal W=\mathcal T_1\cup\mathcal T_2\cup\mathcal B$ where $\mathcal T_1,\mathcal T_2$ are closed walks and $\mathcal B$ is a path connecting them. 
Here, $\mathcal T_1$, $\mathcal T_2$ or $\mathcal B$ can be $\emptyset$.
\item[(c)] For all $\mathcal S\in\{\mathcal T_1,\mathcal T_2,\mathcal B\}$ and for all $C\in \mathcal C_x$, $|E(\mathcal S)\cap E(C)|$ is even.
\item[(d)] If $\mathcal W$ is a single 
closed walk, i.e., $\mathcal T_2=\mathcal B=\emptyset$, then either
$\sum_{e\in\delta(v)\cap E(\mathcal W)}x_e=1$ for all $v\in V(\mathcal W)$ or there exists a vertex $v\in V(\mathcal W)$ such that $\sum_{e\in\delta(v)}x_e<1$.
\item[(e)] For all $v\in V(\mathcal W)$ such that $|\delta(v)\cap E(\mathcal W)|=1$
(i.e., $v$ is an end of walk $\mathcal W$), $\sum_{e\in \delta(v)}x_e<1.$
\end{itemize}
\end{lemma}
Due to (b) in Lemma \ref{lem:p}, we split the cases that either $\mathcal{W}$ is a path (i.e., $\mathcal T_1=\mathcal T_2=\emptyset$) or $\mathcal W$ contains a closed walk.

{\bf $\mathcal W$ is a path.}
In this case, we construct  $y=[y_e],z=[z_e]$ satisfying $x = (y+z)/2$ by starting from $x=[x_e]$ and alternatively adding/subtracting some constant $\varepsilon>0$
following the path $\mathcal W= e_1\rightarrow e_2 \rightarrow \cdots$: $y_e=z_e=x_e$ for $e\notin \mathcal W$ and
\begin{align*}
y_{e_1}\leftarrow x_{e_1} +\varepsilon,\quad y_{e_2}\leftarrow x_{e_2} -\varepsilon,\quad y_{e_3}\leftarrow x_{e_3} +\varepsilon,\cdots\\
z_{e_1}\leftarrow x_{e_1} -\varepsilon,\quad z_{e_2}\leftarrow x_{e_2} +\varepsilon,\quad z_{e_3}\leftarrow x_{e_3} -\varepsilon,\cdots
\end{align*}
We provide an example of $\{x,y,z\}$ in Figure~\ref{fig:case_a}.
Due to (c) and (e) in Lemma \ref{lem:p},
one can check that $y,z$ are feasible points to the constraint polytope of C-LP with sufficiently small $\varepsilon>0$.

\begin{figure*}[ht]
\begin{center}
\includegraphics[width=0.9\textwidth]{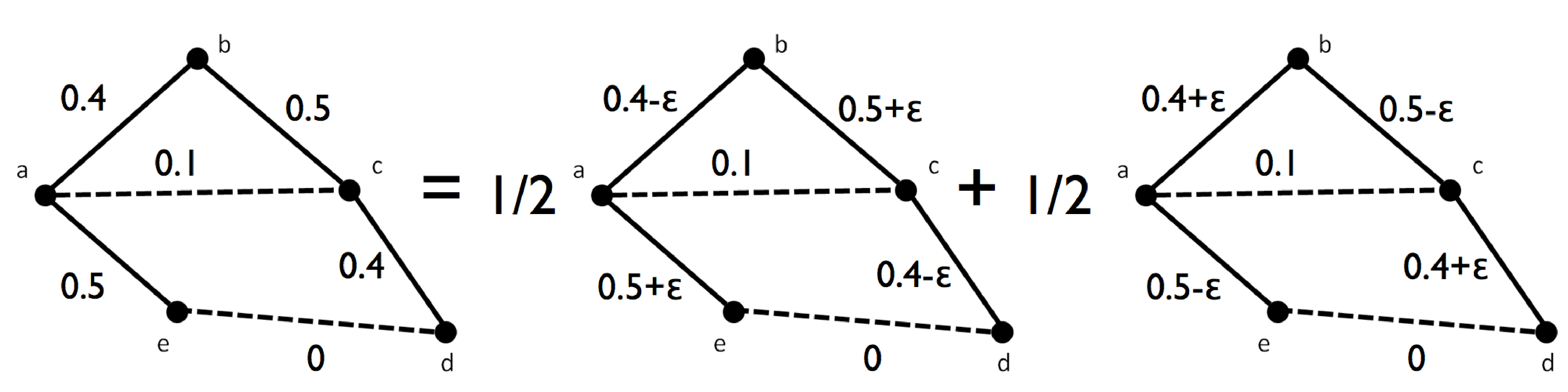}
\end{center}
\caption{\small Example of $x$ (left), $y$ (middle) and $z$ (right) for the case that $\mathcal W$ is a path, where $\mathcal W = e\to a\to b \to c \to d$.}
\label{fig:case_a}
\end{figure*}

{\bf $\mathcal W$ contains a closed walk.} 
In this case, if the length of a closed walk $\mathcal T_1$ or $\mathcal T_2$ in $\mathcal W$ is even, one can
construct two different feasible points $y=[y_e],z=[z_e]$ with $x = (y+z)/2$ by starting from $x=[x_e]$ and alternatively adding/subtracting some
small constant $\varepsilon>0$ following the cycle, similar to the case when $\mathcal W$ is a path.
Therefore, one can assume that $\mathcal W$ is one of the following cases:
\begin{itemize}
\item[(i)] $\mathcal W$ is an odd closed walk, i.e., $\mathcal T_2=\mathcal B=\emptyset$.
\item[(ii)] $\mathcal W$ is a union of two odd closed walks and a path connecting them, i.e.,
$\mathcal T_1,\mathcal T_2\neq \emptyset$ (note that $\mathcal B=\emptyset$ is possible).
\item[(iii)] $\mathcal W$ is a union of an odd closed walk and a path connected to it, i.e., $\mathcal T_2=\emptyset$.
\end{itemize}
First, consider Case (i).
From (d) of Lemma \ref{lem:cx}, either
$\sum_{e\in\delta(v)\cap E(\mathcal W)}x_e=1$ for all $v\in V(\mathcal W)$ or there exists a vertex $v\in V(\mathcal W)$ such that $\sum_{e\in\delta(v)}x_e<1$.
However, $\sum_{e\in\delta(v)\cap E(\mathcal W)}x_e=1$ for all $v\in V(\mathcal W)$ is impossible because in this case,
every $x_e$ for $e\in \mathcal P$ should be half-integral and this
contradicts to (a) of Lemma \ref{lem:p}.
In another case, one can construct two different feasible points $y=[y_e],z=[z_e]$ with $x = (y+z)/2$ by starting from $v$ and alternatively adding/subtracting some
small constant $\varepsilon>0$ following the cycle, similar to the case when $\mathcal W$ is a path.

Next, we consider Case (ii)
where 
Case (iii) can be handled in a similar manner.
We construct two different feasible points $y=[y_e],z=[z_e]$ with $x = (y+z)/2$ by starting from $x=[x_e]$ and
\begin{itemize}
\item alternatively adding/subtracting $\varepsilon$ following closed walks $\mathcal T_1$ and $\mathcal T_2$
\item alternatively adding/subtracting $2\varepsilon$ following the bridge edges $\mathcal B$
\end{itemize}
where $\varepsilon>0$ is small enough and
the feasibility of $y,z$ is from (c) of Lemma \ref{lem:p}.
We provide an example of $\{x,y,z\}$ in Figure~\ref{fig:case_b}. This implies that $x$ is not a vertex, thus completing the proof of the half-integrality of C-LP.

\begin{figure*}[ht]
\begin{center}
\includegraphics[width=1\textwidth]{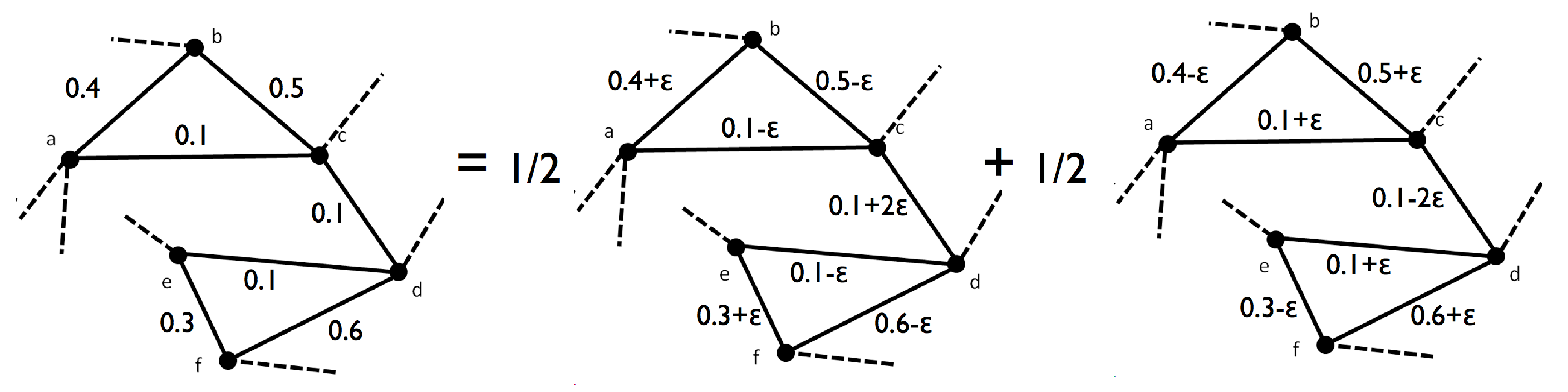}
\end{center}
\caption{\small Example of $x$ (left), $y$ (middle) and $z$ (right) for the case that $\mathcal W$ consists of two cycles and a path between them, where $\mathcal T_1 = a\to b \to c\to a$, $\mathcal B=c\to d$
and $\mathcal T_2
= e\to d\to f\to e$.}
\label{fig:case_b}
\end{figure*}

\subsection{Proof of Lemma \ref{lem:p}}
We plan to explicitly construct a walk $\mathcal W$ consisting of edges in $E_x$ as follows.
If there exists $e\in E_x\setminus\bigcup_{C\in\mathcal C_x}E(C)$ such that $x_{e}\ne 1/2$, initialize $\mathcal W$ by $e$.
Otherwise,
pick $C\in\mathcal C_x$ such that $E(C)$ contains a non-half-integral edge:
such $C$ always exists as we assumed $x\notin\{0,1/2,1\}^{|V|}$.
Furthermore, by Lemma \ref{lem:cx}, there always exist $u,v\in V(C)$ such that 
\begin{equation*}\sum_{e^\prime \in\delta(u)\cap E(C)}x_e<1
\qquad \sum_{e^\prime\in\delta(v)\cap E(C)}x_e<1
\end{equation*}
and the even path along $E(C)\cap E_x$ between $u,v$ contains a
non-half-integral edge. 
Then, we initialize $\mathcal W$ by the even path.
We expand $\mathcal W$ in both directions using the following procedure for each ending edge $e^{\mbox{\tiny END}}$ and the corresponding ending vertex $v^{\mbox{\tiny END}}$ of $\mathcal W$:
\begin{itemize}
\item[0.] Initially, find $C_{e^{\mbox{\tiny END}}}\in\mathcal C_x$ such that $e^{\mbox{\tiny END}}\in C_{e^{\mbox{\tiny END}}}$.
If such $C_{e^{\mbox{\tiny END}}}$ does not exist, set $C_{e^{\mbox{\tiny END}}}=\emptyset$.
\item[1.] Repeat the followings:
\begin{itemize}
\item[1-1.] If $v^{\mbox{\tiny END}}$ is not a vertex of any cycle in $\mathcal C_x\setminus\{C_{e^{\mbox{\tiny END}}}\}$, 
then choose $e \in(\delta(v^{\mbox{\tiny END}})\cap E_x)\setminus  (E(\mathcal W)\cup E(C_{e^{\mbox{\tiny END}}}))$ and add $e$ to $\mathcal W$. 
Set $e^{\mbox{\tiny END}}\leftarrow e$. If such $e$ does not exist, terminate Step 1.
\item[1-2.] 
Else if $v^{\mbox{\tiny END}}$ is a vertex of some cycle $\mathcal C_{v^{\mbox{\tiny END}}} \in C_x\setminus\{C_{e^{\mbox{\tiny END}}}\}$, then
 there exists $u\in V(C_{v^{\mbox{\tiny END}}})$ such that \begin{equation*}\sum_{e\in\delta(u)\cap E\left(C_{v^{\mbox{\tiny END}}}\right)}x_{e}<1
 \qquad\mbox{from Lemma \ref{lem:cx}}.
 \end{equation*}
 Lemma \ref{lem:cx} also implies that there exists an even path along $E(C_{v^{\mbox{\tiny END}}})\cap E_x$ from $v^{\mbox{\tiny END}}$ to $u$.
Add the even path between $u,v^{\mbox{\tiny END}}$ to $\mathcal W$ and set $e^{\mbox{\tiny END}}$ to be the last edge of the path.
\item[1-3]
For the new ending edge $e^{\mbox{\tiny END}}$ and the new ending vertex $v^{\mbox{\tiny END}}$, 
find $C_{e^{\mbox{\tiny END}}}\in\mathcal C_x$ such that $e^{\mbox{\tiny END}}\in C_{e^{\mbox{\tiny END}}}$. If such $C_{e^{\mbox{\tiny END}}}$ does not exist, set $C_{e^{\mbox{\tiny END}}}=\emptyset$.
\end{itemize}
until one of the following events occur: 
\begin{itemize}
\item[1-4] $e^{\mbox{\tiny END}}$ touches $\mathcal{W}$ but does not touch any odd cycle in $\mathcal C_x$ intersecting with $\mathcal W$, i.e., $v^{\mbox{\tiny END}}$ was already in $V(\mathcal W)$ before $e^{\mbox{\tiny END}}$ is added and $v^{\mbox{\tiny END}}\notin V(C)$ for all $C\in\mathcal C_x\setminus\{C_{e^{\mbox{\tiny END}}}\}$. 
\item[1-5] $e^{\mbox{\tiny END}}$ touches some odd cycle in $\mathcal C_x$ intersecting with $\mathcal W$, i.e, $v^{\mbox{\tiny END}}\in V(C)$ for some $C\in\mathcal C_x\setminus\{C_{e^{\mbox{\tiny END}}}\}$ such that  $E(C)\cap E(\mathcal W)\ne\emptyset$. Then, $\mathcal W$ contains an even path $\mathcal P=v\to\cdots\to u$ in $C$ due to Step 1-2. Without loss of generality, assume that $u$ was added to $\mathcal W$ after $v$.
If $v^{\mbox{\tiny END}}\in V(\mathcal P)$, continue the expansion from $v^{\mbox{\tiny END}}$ to $v$ by following an even path in $E(C)\cap E_x$.
Otherwise, continue the expansion from $v^{\mbox{\tiny END}}$ to $u$ by following an even path in $E(C)\cap E_x$.
\end{itemize}
\item[2.] If $\mathcal W$ is a closed walk but $V(\mathcal W)\not\subset V_x$, then choose $v\in V(\mathcal W)\setminus V_x$ and go to Step 1 with ending vertex $v^{\mbox{\tiny END}}\leftarrow v$. Otherwise, terminate.
\end{itemize}
We note that when the procedure terminates for one end of $\mathcal W$, the procedure for the other end continues until it meets the termination criteria.
Now, we verify (a)-(e) in Lemma \ref{lem:p} with the constructed walk $\mathcal W$ as follows:
\begin{itemize}
\item[(a)] As we started from an edge/walk containing $x_{e^*}\in\{0,1/2,1\}$, (a) holds.
\item[(b)] The termination criteria in Step 1 is either when a closed walk appears (Steps 1-4 and 1-5) or the walk ends without
being closed (Step 1-1). 
Hence, (b) holds if in Step 1-5, one chooses $\mathcal T_i$ as the closed walk from $u$ to $u$ or $v$ to $v$ corresponding to where the walk ends. 
\item[(c)] We design Step 1-5 for (c). 
\item[(d)] Step 2 and the termination condition in Step 1-1 implies (d).
\item[(e)] This is because
$v\in V(\mathcal W)$ such that $|\delta(v)\cap E(\mathcal W)|=1$ only appears at the termination in Step 1-1.
\end{itemize}
This completes the proof of Lemma \ref{lem:p}.

\section{Conclusion}
\label{sec:concl}

It was recently shown that BP converges to the correct MAP assignment for a certain class of natural loopy GM formulations of classical combinatorial optimization problems. While most of the prior work on this subject has been focused on `BP analysis', in this paper we choose another path. We explore the freedom in adding to the original GM new constraints that on one hand keep the MAP computation invariant (the same) and on the other hand allows to certify convergence and correctness of BP. 
Even though the technique of this paper was developed solely for the MWM model, we believe that our approach is extendable to a broader set of MAP problems.
Our approach relays on finding a sequence of additional constraints such that the respective LP relaxation is gapless (in terms of its use to compute the MAP assignment). Therefore, another important direction for future analytic work is related to developing a provably efficient and distributed (or semi-distributed) way of discovering the sequence of constraints (odd cycles within the MWM model) closing the gap between respective LP and IP (Integer Programming).

\subsubsection*{Acknowledgment}

The work at LANL was carried out under the auspices of the National Nuclear Security Administration of the U.S. Department of Energy at Los Alamos National Laboratory under Contract No. DE-AC52-06NA25396.

\end{document}